\documentclass[12pt, a4paper]{article}

\usepackage[utf8]{inputenc}
\usepackage[T1]{fontenc}

\usepackage{amsmath, amssymb, amsthm}
\usepackage{mathtools}

\usepackage[top=2.5cm, bottom=2.5cm, left=3cm, right=3cm]{geometry}
\usepackage{setspace}
\usepackage{enumitem}
\usepackage{booktabs}
\usepackage{multirow}
\usepackage{graphicx}
\usepackage{caption}
\usepackage{subcaption}
\usepackage{float}

\usepackage{natbib}
\usepackage[colorlinks=true, linkcolor=blue, citecolor=blue, urlcolor=blue]{hyperref}

\usepackage{algorithm}
\usepackage{algorithmic}

\usepackage{listings}
\usepackage{xcolor}

\newtheorem{theorem}{Theorem}[section]
\newtheorem{proposition}[theorem]{Proposition}

\newtheorem{definition}{Definition}[section]
\newtheorem{remark}{Remark}[section]

\newcommand{\R}{\mathbb{R}}
\newcommand{\E}{\mathbb{E}}
\newcommand{\Prob}{\mathbb{P}}
\newcommand{\Q}{\mathbb{Q}}
\newcommand{\F}{\mathcal{F}}

\newcommand{\dt}{\,\mathrm{d}t}

\title{\textbf{Neural Networks Learning the Radon--Nikodym Derivative:\\
Empirical Option Pricing in Incomplete Markets}}

\author{Ziyuan Zhang, Kiseop Lee\\
\small Department of Statistics \\
\small Purdue University}

\date{\today \quad (Draft, please do not cite)}

\begin{document}

\maketitle

\begin{abstract}
In incomplete markets, the no-arbitrage condition (NFLVR) guarantees only the existence, not the
uniqueness, of an equivalent local martingale measure (ELMM): unhedgeable risk sources such as jump
risk and stochastic volatility generate an entire family of equivalent measures satisfying the
condition, different measures correspond to different option prices, and the measure the market
actually selects cannot be uniquely determined by the asset dynamics themselves. This paper
systematically characterizes the \emph{identifiability} of $\Q$ from option data, and on that basis
proposes a pricing measure that can be identified from data and that prices \emph{any} claim
consistently. Our starting point is an identification boundary (an ``identification wall''): European
options identify only the terminal marginal distribution; out-of-sample tails and path/joint structure
cannot be determined from observed call prices, and require additional instruments \emph{matched} to the
priced risk (variance and higher-moment swaps, path-dependent claims, etc.) in order to be identified.
Within this framework, the classical minimum-relative-entropy weighted Monte Carlo (WMC) is the optimal
baseline for identifying the \emph{marginal}; we generalize it into a full path-space measure change,
parameterized by a neural network and defined on physical scenarios.

Concretely, this paper proposes XiNet---a data-driven method that requires no specification of a
parametric model, learning the Radon--Nikodym derivative $\xi = \mathrm{d}\Q/\mathrm{d}\Prob$ directly
from observable path statistics, using the option-price cross-section as a soft constraint, and
assigning each Monte Carlo path a path-specific measure-change weight. To free the framework genuinely
from model assumptions, we construct, from the daily log-return series, an 8-dimensional model-free
feature set (realized volatility, maximum single-day absolute return, skewness, excess kurtosis,
volatility time-variation, volatility clustering, maximum drawdown, cumulative log-return); these
features require no parameter estimation and can be computed directly from simulated or real market
data.

In settings with an explicit risk premium (Merton jump, Heston volatility, CGMY Lévy-measure risk
premium), $\mathrm{d}\Q/\mathrm{d}\Prob$ is path-dependent. We distinguish two classes of pricing
problems and obtain a clear and honest picture. For \emph{European} (marginal) pricing: within the
calibration range, minimum-relative-entropy WMC is already near-optimal, and XiNet matches but does not
surpass it---path features bring no additional identifying power on a purely marginal problem, consistent
with the theory that ``European options identify only the terminal marginal''; and under strict
out-of-sample conditions (deep out-of-the-money entirely excluded from calibration), both are bound by
the identification wall and cannot reliably recover the tail. For \emph{path-dependent} claims: the
situation reverses. Calibrated on the \emph{exact same} European option surface, a per-marginal method
is structurally unable to price them (an at-the-money forward-start option is mispriced on the order of
$+100\%$), whereas the single self-consistent measure learned by XiNet keeps the bias to $+0.3\%$ (jump
risk premium), significantly outperforming the equally self-consistent maximum-entropy WMC ($-24\%$)---because
its parameterization $\xi=f_\theta(\text{path features})$ captures the joint structure that European
marginals cannot constrain. Identification is thus \emph{risk-specific}: the terminal marginal is
identified by European options, the tail by moment/variance instruments, and the joint and path
structure by path-dependent instruments; and XiNet is a single pricing measure that absorbs whatever
instruments are available and prices all claims (including exotics) consistently. This provides an
operational framework, with a clear identification boundary, for ``identifying the market pricing measure
in order to price and hedge'' in incomplete markets.
\end{abstract}

\textbf{Keywords:} incomplete markets; equivalent martingale measure; measure identification;
Radon--Nikodym derivative; neural networks; weighted Monte Carlo; path-dependent option pricing.

\tableofcontents
\newpage

\section{Introduction}

The theoretical foundation of option pricing rests on the fundamental theorems of asset pricing.
Following the seminal work of \citet{delbaen1994general}, under the condition of no free lunch with
vanishing risk (NFLVR), the absence of arbitrage in the market is equivalent to the existence of an
equivalent local martingale measure (ELMM) $\Q\sim\Prob$ under which discounted asset prices form a
local martingale~\citep{protter_partial_2001}, so that the European call price can be written as
\begin{equation}
  C(K, T) = e^{-rT}\,\E^{\Q}\!\left[\left(S_T - K\right)^+\right].
  \label{eq:pricing_formula}
\end{equation}
Market completeness further guarantees the uniqueness of the ELMM, in which case
Eq.~\eqref{eq:pricing_formula} gives the unique replication-cost price.

Real financial markets, however, are generally incomplete: unhedgeable risk sources such as jump risk
(e.g., the Merton jump-diffusion model~\citep{merton1976option}) and stochastic volatility (e.g., the
Heston model~\citep{heston1993closed}) render the ELMM non-unique, and an entire family of equivalent
measures $\mathcal{Q}$ satisfies the no-arbitrage condition. Then different $\Q\in\mathcal{Q}$
correspond to different option prices, and the no-arbitrage condition itself cannot further discriminate
``the measure the market actually selects.'' By the Radon--Nikodym theorem, any ELMM is completely
characterized by its derivative $\xi=\mathrm{d}\Q/\mathrm{d}\Prob$ with respect to $\Prob$, so the
question of ``which $\Q$ to choose'' is equivalent to ``which $\xi$ to determine''---the upstream problem
that every risk-neutral pricing method must confront but routinely skips by default.

At the computational level, when an analytical pricing formula is unavailable, Monte Carlo simulation is
the general means of computing Eq.~\eqref{eq:pricing_formula}, and finite-sample paths almost surely
violate the martingale property, causing a systematic pricing bias. The empirical martingale simulation
(EMS) proposed by \citet{duan_empirical_1998} forces the discounted sample mean to equal the current
price via a recursive multiplicative correction, recovering the martingale property in the finite-sample
sense and greatly reducing the variance. But the common premise of EMS and its subsequent extensions
(see Section~\ref{sec:related}) is that $\Q$ or $\mathrm{d}\Q/\mathrm{d}\Prob$ is known in some
parametric sense: they are essentially finite-sample corrections to an \emph{already-chosen} martingale
measure, and do not touch the measure identification problem itself---yet this default premise simply
cannot be met in incomplete markets.

This upstream problem---\emph{how to identify the ELMM from market data}---is precisely the starting
point of this paper. We propose XiNet, which uses a neural network with observable path statistics as
inputs to learn the Radon--Nikodym derivative $\xi=\mathrm{d}\Q/\mathrm{d}\Prob$ directly, using the
option-price cross-section as a soft constraint, and identifies from data the equivalent martingale
measure consistent with market pricing.

\subsection{Review of existing methods}
\label{sec:related}

Existing work on identifying the equivalent martingale measure in incomplete markets can be broadly
grouped into four categories, which approach the ELMM from different angles but all leave gaps---in
identification, path coherence, or generalization ability---that this paper tries to bridge. The first
category is \emph{parametric model calibration}: given a dynamic model (such as Heston or Bates), the
risk-neutral parameters are determined by minimizing the difference between model prices and market
prices; the pricing quality of this route depends on the model specification, and when the true dynamics
deviate from the assumption, the calibration result may exhibit a systematic bias---this paper takes
Black--Scholes implied-volatility calibration as the representative baseline of this route. The second
category is \emph{nonparametric density estimation}: by the Breeden--Litzenberger
formula~\citep{breeden1978prices}, the risk-neutral density can be written as the second derivative of
option prices with respect to the strike, $f^{\Q}(K;T) = B(0,T)^{-1}\,\partial^2 C(K,T)/\partial K^2$,
so the risk-neutral distribution can be extracted directly from the cross-sectional quotes without a
parametric model; its difficulty lies in the instability of numerical second-order differentiation of
discrete, noisy quotes---noise is amplified by second differencing, readily producing artifacts such as
negative densities or wild oscillations, which is especially pronounced on real data with sparse quotes
and wide bid--ask spreads. Although in practice this can be controlled with regularization such as
smoothing splines or penalized regression, the choice of regularization strength is subjective, and this
method gives the cross-sectional density at each maturity, making it hard to generate temporally coherent
risk-neutral paths and offering limited support for pricing path-dependent products. The third category
is \emph{scenario reweighting} (weighted Monte Carlo): \citet{avellaneda1998minimum} reweights a
pre-generated scenario set by minimizing relative entropy subject to satisfying the option-price
constraints, and \citet{elices2008conditions} further points out that if the constraints are imposed only
at the marginal level of each maturity rather than at the path level, the resulting measure need not have
the martingale property; this class of methods solves for a set of weights point by point on a fixed
scenario set, the weights are bound to the specific sample and have no ability to generalize to new
paths, and changing the scenario set requires re-solving, whereas what this paper learns is a
\emph{function} from path features to weights, which acts directly on paths unseen during training. The
fourth category is \emph{empirical martingale simulation} (EMS): \citet{duan_empirical_1998} forces the
discounted sample mean of the simulated paths to equal the current price via a recursive multiplicative
correction, recovering the martingale property in finite samples and reducing the Monte Carlo variance;
\citet{huang2014modified}, when the parametric form of $\Q$ is hard to derive explicitly (e.g., for the
GARCH family the conditional distribution is not closed under the Esscher transform), proposed empirical
P-martingale simulation (EPMS), instead imposing an empirical-martingale constraint on the
Radon--Nikodym derivative process $\Lambda_t = \mathrm{d}\Q/\mathrm{d}\Prob|_{\mathcal{F}_t}$ under
$\Prob$ to approximate the measure change; \citet{huang_multi-asset_2026} extended this framework to the
multi-asset case; and \citet{yuan_strong_2009} established strong consistency theory for the EMS
estimator; however, this class of methods is essentially finite-sample variance reduction within an
\emph{already-chosen} $\Q$, and does not touch the upstream identification problem of ``which $\Q$ to
choose,'' which is exactly the starting point of this paper. Figure~\ref{fig:concept} intuitively
compares the above routes with this paper's method: BS calibration only fits implied volatility at a few
training strikes and extrapolates linearly to out-of-range options, with an unreliable direction; Crude
MC needs the risk-neutral measure parameters to be known before it can simulate with equal weights; EMS
applies a uniform scalar scaling to $\Prob$-measure paths to satisfy the martingale constraint and cannot
characterize the heterogeneity across paths; while the XiNet proposed in this paper learns path-specific
Radon--Nikodym weights $\xi(\omega)$ via a neural network, depending on neither the $\Q$-measure
parameters nor a parametric model assumption, and in the figure the thickness and color depth of the path
lines are proportional to the magnitude of $\xi(\omega)$.

\begin{figure}[t]
  \centering
  \includegraphics[width=\textwidth]{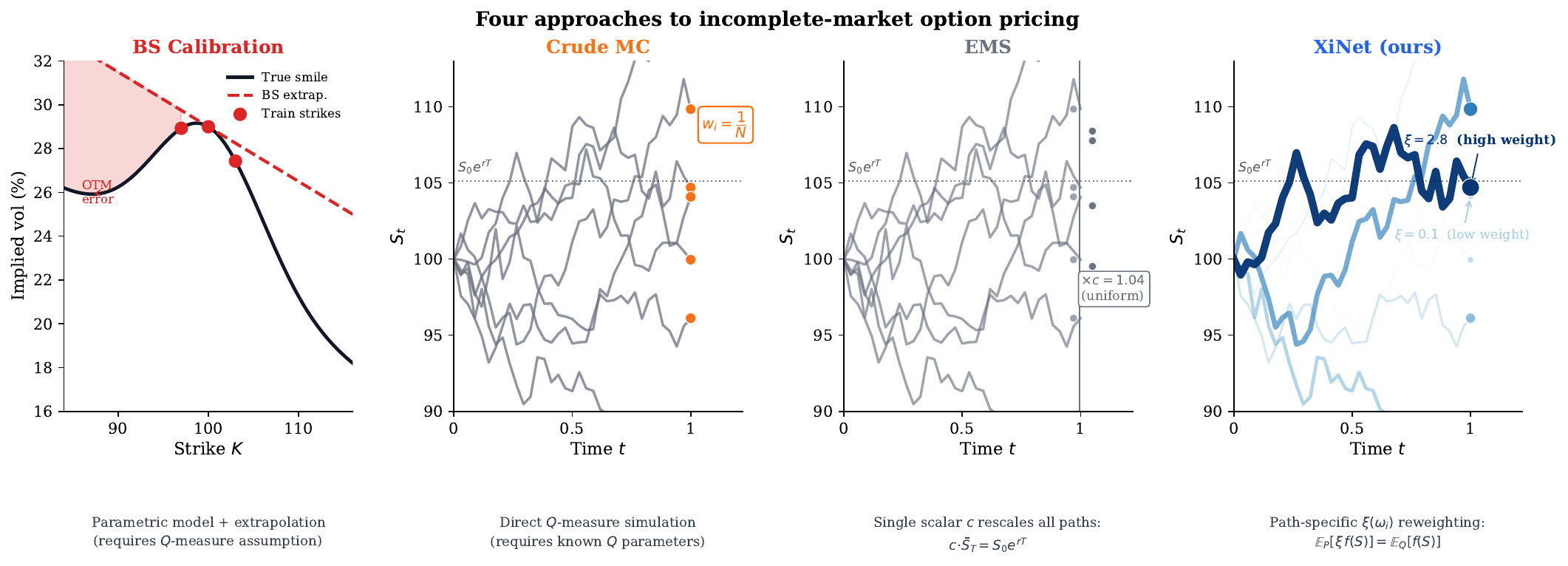}
  \caption{Conceptual comparison of four incomplete-market option pricing methods.
    \textbf{BS calibration}: fits implied volatility at a few training strikes, then extrapolates
    linearly to out-of-range options, relying on model assumptions with an unreliable extrapolation
    direction.
    \textbf{Crude MC}: equally weighted simulation under the $Q$ measure, requiring the risk-neutral
    measure parameters to be known.
    \textbf{EMS}: applies a uniform scalar scaling to $P$-measure paths to satisfy the martingale
    constraint, unable to capture the heterogeneity across paths.
    \textbf{XiNet} (this paper): learns path-specific Radon--Nikodym weights $\xi(\omega)$ via a neural
    network, depending on neither the $Q$-measure parameters nor a parametric model assumption.
    The line thickness and color depth are proportional to the magnitude of $\xi(\omega)$.}
  \label{fig:concept}

\end{figure}
\subsection{Contributions}
The core contribution of this paper is to recast the ELMM identification problem as a supervised
learning task, comprising three specific points:
\begin{enumerate}
  \item \textbf{Identification framework.} We learn the Radon--Nikodym derivative
        $\xi_i = \mathrm{d}\Q/\mathrm{d}\Prob\,(\omega_i)$ directly, assigning each $\Prob$-measure path a
        path-specific weight, and parameterize $\xi$ with a neural network so that it can capture the
        nonlinear effect of path features (especially the number of jumps) on the measure change.
  \item \textbf{Training objective.} We replace the direct relative squared price error with a
        return-matching constraint at the training strikes (the weighted expectations under $\Prob$ of the
        discounted stock and the normalized call payoff both equal $1$), and normalize the network output
        by its sample mean before pricing at inference.
  \item \textbf{Systematic evaluation.} We evaluate the method systematically in a controlled Merton
        jump-diffusion environment, and quantitatively decompose the contribution of each path feature via
        ablation experiments.
\end{enumerate}

The entire framework is fully data-driven: it requires no assumption that market data follow any
parametric distribution, only the computation of path statistics from the observable return series;
applying the learned Radon--Nikodym derivative $\xi_\theta(\phi)$ to $\Prob$-measure paths yields
risk-neutral pricing consistent with the option-price cross-section (the empirical version of
Eq.~\eqref{eq:pricing_rnd}, see Section~\ref{sec:method}).

The remainder of the paper is organized as follows:
Section~\ref{sec:theory} establishes the theoretical framework;
Section~\ref{sec:method} describes the methodology for learning $\xi$ with a neural network;
Section~\ref{sec:experiments} reports the numerical experiments, including the failure mechanisms of the
methods under risk-premium settings and the BS calibration baseline with Greek information;
Section~\ref{sec:realdata} validates marginal-level feasibility and robustness on the real SPX option
surface;
Section~\ref{sec:conclusion} concludes and outlines future work.
The appendices give a summary of the experimental parameters, a theoretical and experimental analysis of
the structural limitations of EMS, and the ablation experiments on Merton-model-specific features.

\section{Theoretical framework}
\label{sec:theory}

\subsection{Preliminaries: measure change and no-arbitrage pricing}

Let $(\Omega, \mathcal{F}, \mathbb{P})$ be a probability space, and $\mathbb{Q}\sim\mathbb{P}$ an
equivalent measure. The Radon--Nikodym theorem guarantees the existence of a unique
($\mathbb{P}$-a.s.) nonnegative random variable $\xi = \mathrm{d}\mathbb{Q}/\mathrm{d}\mathbb{P}$,
satisfying the normalization $\mathbb{E}^{\mathbb{P}}[\xi]=1$ and the change-of-measure identity
\begin{equation}
    \mathbb{E}^{\mathbb{Q}}[X] = \mathbb{E}^{\mathbb{P}}[\xi\, X],
    \qquad \text{for any } \mathbb{Q}\text{-integrable } X.
    \label{eq:cme}
\end{equation}
Intuitively, $\xi(\omega)$ is the ratio of the probability weight of path $\omega$ under $\mathbb{Q}$
relative to $\mathbb{P}$. This is the operational basis of the entire measure-change framework.

The discounted price of a risky asset $\tilde S_t = e^{-rt}S_t$ is generally not a martingale under the
physical measure $\mathbb{P}$ (risk-averse investors demand a positive excess return). The measure
change eliminates this drift by reweighting paths: let $\xi_t = \mathbb{E}^{\mathbb{P}}[\xi_T\mid\mathcal{F}_t]$;
then $\tilde S_t$ being a martingale under $\mathbb{Q}$ is equivalent to $\{\xi_t\tilde S_t\}$ being a
martingale under $\mathbb{P}$. In the continuous-diffusion case, Girsanov's theorem gives
$\xi_t = \exp\!\big(\!-\!\int_0^t\theta_s\,\mathrm{d}W_s^{\mathbb{P}} - \tfrac12\int_0^t\theta_s^2\,\mathrm{d}s\big)$,
where $\theta_t=(\mu_t-r)/\sigma_t$ is the market price of risk. By the fundamental theorem of asset
pricing~\citep{delbaen1994general}, the market is arbitrage-free (NFLVR) if and only if an equivalent
martingale measure exists, in which case the no-arbitrage price of any attainable contingent claim is
$V_0 = \mathbb{E}^{\mathbb{Q}}[e^{-rT}X_T] = \mathbb{E}^{\mathbb{P}}[\xi_T\,e^{-rT}X_T]$. In
incomplete-market models such as jump diffusions and stochastic volatility, the $\xi$ satisfying the
martingale condition is not unique, and \emph{how to identify from data the measure that the market
actually adopts} is precisely the central problem of this paper.

\subsection{Incomplete markets and the non-uniqueness of the ELMM}

Let the filtration $(\F_t)_{t\in[0,T]}$ on the probability space $(\Omega,\F,\Prob)$ satisfy the usual
conditions, let $S=(S_t)$ be the underlying asset price process, and let $r$ be the risk-free rate.

\begin{definition}[Equivalent local martingale measure]
  A measure $\Q$ is called an equivalent local martingale measure of $(S,r)$ if
  $\Q\sim\Prob$ (equivalent) and the discounted price $e^{-rt}S_t$ is a local martingale under $\Q$.
  The set of all ELMMs is denoted $\mathcal{Q}$.
\end{definition}

By the Delbaen--Schachermayer fundamental theorem, $\mathcal{Q}\neq\varnothing$ is equivalent to NFLVR
(no arbitrage), while $|\mathcal{Q}|=1$ is equivalent to market completeness. In incomplete markets
$|\mathcal{Q}|>1$, and each $\Q\in\mathcal{Q}$ corresponds to a different pricing functional.

The Radon--Nikodym theorem guarantees that for any $\Q\sim\Prob$ there exists an $\F_T$-measurable
random variable
\begin{equation}
  \xi = \frac{\mathrm{d}\Q}{\mathrm{d}\Prob} \geq 0,
  \quad \E^{\Prob}[\xi] = 1,
\end{equation}
such that for any $\F_T$-measurable bounded random variable $X$, $\E^{\Q}[X] = \E^{\Prob}[\xi X]$. Thus
Eq.~\eqref{eq:pricing_formula} is equivalent to
\begin{equation}
  C(K,T) = e^{-rT}\,\E^{\Prob}\!\left[\xi\cdot\left(S_T - K\right)^+\right].
  \label{eq:pricing_rnd}
\end{equation}
The ELMM identification problem is therefore recast as: \emph{determine the $\xi$ satisfying the
condition from the option-price constraints}.

\subsection{ELMM selection criteria}
In incomplete markets, $\mathcal{Q}$ is generally not a singleton, and additional economic or
mathematical criteria are needed to select a unique measure from it; two representative ones are as
follows. The first is the \emph{minimal-entropy equivalent martingale measure} (MEMM):
\citet{frittelli2000minimal} proved that, in the dual framework of exponential-utility maximization, the
optimal risk-neutral measure is the ELMM with minimal relative entropy (KL divergence),
\begin{equation}
  \Q^* = \arg\min_{\Q\in\mathcal{Q}}\, H(\Q\|\Prob),
  \qquad H(\Q\|\Prob) = \E^{\Prob}\!\left[\xi\log\xi\right],
\end{equation}
i.e., the measure ``with the least information difference from the physical measure''; in discrete time
and scenario-set implementations it corresponds to a convex optimization problem with good numerical
properties. The second is the \emph{variance-optimal martingale measure} (VOMM): based on the
mean-variance hedging criterion, one selects the ELMM that minimizes the variance of the hedging
error~\citep{schweizer1996approximation}. Unlike the above explicit variational criteria, this paper's
neural-network method does not specify a target functional of the MEMM/VOMM type a priori, but instead
calibrates $\xi$ with the return-matching loss~\eqref{eq:loss_return} and fixes the scale by sample-mean
normalization at the inference stage; nonetheless, network capacity and regularization still introduce an
implicit inductive bias, avoiding extremely dispersed weights while satisfying the martingale and
price-information constraints.

\subsection{The underlying asset models}
\label{sec:models}

The scalar correction of EMS performs poorly on some models. These models share a common feature: the
true Radon--Nikodym derivative $\xi = \mathrm{d}\Q/\mathrm{d}\Prob$ is path-dependent and cannot be
characterized by applying a uniform scalar scaling to all paths, so EMS produces a systematic pricing
bias. The path dependence arises from two mechanisms. The first is \emph{structural path dependence}:
even when $\Prob$ and $\Q$ share the same model parameters and differ only in drift, the underlying
stochastic mechanism itself---an infinitely active pure-jump process or a history-dependent jump
intensity---makes $\xi(\omega)$ a nonlinear functional of the path rather than a scalar; the CGMY
pure-jump model and the Hawkes self-exciting jump-diffusion model belong to this class. The second is an
\emph{explicit risk premium}: the model parameters of $\Prob$ and $\Q$ genuinely differ, reflecting the
market's pricing of jump, volatility, or L\'{e}vy-measure risk, in which case
$\mathrm{d}\Q/\mathrm{d}\Prob$ cannot be approximated by any constant, and methods that ignore this
distributional difference (EMS included) will produce a systematic error; the three settings MertonJRP,
HestonVRP, and CGMY-JRP belong to this class.

The observable consequences of the two mechanisms at the \emph{pricing} level are not the same. In the
first class, $\Prob$ and $\Q$ share the same jump and volatility law and differ only in drift; although
$\xi(\omega)$ is in theory still a path functional, a correct reweighting (such as the weighted Monte
Carlo of Section~\ref{sec:wmc_baseline}) can match prices exactly at the calibration strikes, and its
path dependence does not translate into observable pricing failure, so we treat it as a
\textbf{drift-only} control group; only the second class genuinely changes the \emph{shape} of the $\Prob$
and $\Q$ distributions. Correspondingly, a setting that changes only the vol-of-vol ($\sigma_v$) is not
an equivalent measure change (the diffusion coefficient must remain invariant under an equivalent
change), and we do not adopt it.

\subsubsection{The CGMY pure-jump model}

The CGMY model~\citep{carr2002fine} is a parametric Lévy process whose jump Lévy measure is
\begin{equation}
  \nu(\mathrm{d}x) = C\,\frac{e^{-M|x|}}{|x|^{1+Y}}\,\mathbf{1}_{x<0}\,\mathrm{d}x
    + C\,\frac{e^{-Gx}}{x^{1+Y}}\,\mathbf{1}_{x>0}\,\mathrm{d}x,
\end{equation}
where $C>0$ controls the activity level, $G,M>0$ control the exponential decay rates of the right and
left jumps respectively, and $Y\in(0,2)$ controls the jump activity ($Y=0.5$ corresponds to finite
variation and infinite activity). CGMY contains no diffusion component and is a pure-jump process in the
strict sense; option prices are computed by the Carr--Madan FFT method. Under the baseline setting,
$\Prob$ and $\Q$ share the same Lévy measure and differ only in drift; but the infinitely active
pure-jump structure makes the martingale transform $\xi(\omega)$ depend on the fine realization of
infinitely many small jumps along the path, and it is a path functional rather than a scalar, an
essential difference from diffusion-dominated models. The experimental parameters are $S_0=100$,
$r=0.05$, $\mu=0.10$, $C=0.50$, $G=8.0$, $M=12.0$, $Y=0.50$.

\subsubsection{The Hawkes self-exciting jump-diffusion model}

The Hawkes model~\citep{hawkes1971spectra} characterizes jump clustering via a self-exciting point
process. Under $\Prob$, the asset price and the stochastic jump intensity satisfy
\begin{align}
  \frac{\mathrm{d}S_t}{S_t} &= (\mu - \lambda_t\bar{k})\,\dt
    + \sigma\,\mathrm{d}W_t + (e^{J_t}-1)\,\mathrm{d}N_t, \\
  \lambda_t &= \lambda_0 + \alpha\int_0^t e^{-\beta(t-s)}\,\mathrm{d}N_s,
\end{align}
where $\alpha/\beta<1$ (the branching ratio) guarantees stationarity, and
$J_t\sim\mathcal{N}(\mu_j,\delta_j^2)$. Because $\lambda_t$ is a functional of the complete jump history
before time $t$, $\xi$ is path-dependent even when $\Prob$ and $\Q$ share the same jump parameters, and
EMS, applying a uniform scalar scaling to all paths, cannot capture this history-dependent structure. The
baseline prices under $\Q$ are computed by Monte Carlo simulation of $2$ million paths. The experimental
parameters are $S_0=100$, $r=0.05$, $\mu=0.12$, $\sigma=0.15$, $\lambda_0=1.0$, $\alpha=1.5$, $\beta=3.0$
(branching ratio $0.5$), $\mu_j=-0.05$, $\delta_j=0.12$.

\subsubsection{Explicit risk-premium settings}

The following three settings construct a genuinely path-dependent $\xi = \mathrm{d}\Q/\mathrm{d}\Prob$ by
explicitly specifying different parameters for $\Prob$ and $\Q$. Paths are simulated under $\Prob$, and
the pricing targets are computed by the analytical formulas (Merton / Heston / CGMY) under the
corresponding $\Q$ parameters.

\emph{Merton jump risk premium (MertonJRP).} The market pricing of jump risk makes jumps under $\Q$ more
frequent and larger: $\Prob$ takes $\lambda=0.50$, $\mu_j=-0.10$, $\delta=0.20$; $\Q$ takes
$\lambda^{\Q}=2.00$, $\mu_j^{\Q}=-0.20$, $\delta^{\Q}=0.25$. Common parameters are $S_0=100$, $r=0.10$,
$\sigma=0.20$.

\emph{Heston volatility risk premium (HestonVRP).} The market pricing of volatility risk changes the
mean-reversion speed and long-run level of the variance process under $\Q$: $\Prob$ takes $\kappa=2.0$,
$\theta=0.04$; $\Q$ takes $\kappa^{\Q}=5.0$, $\theta^{\Q}=0.016$, satisfying
$\kappa\theta=\kappa^{\Q}\theta^{\Q}$. Common parameters are $S_0=100$, $r=0.03$, $v_0=0.04$,
$\sigma_v=0.30$, $\rho=-0.70$.

\emph{CGMY Lévy-measure risk premium (CGMY-JRP).} The market pricing of negative-jump risk changes the
Lévy-measure parameters under $\Q$, raising the overall activity and concentrating mass toward the left
tail: $\Prob$ takes $C=0.50$, $G=8.0$, $M=12.0$; $\Q$ takes $C^{\Q}=0.80$, $G^{\Q}=4.0$, $M^{\Q}=18.0$.
$Y=0.50$ is the same under both measures, and the common parameters are $S_0=100$, $r=0.05$.

\subsection{Terminal-scalar sufficiency for European option pricing}
\label{sec:terminal}

In theory, the Radon--Nikodym derivative is a martingale process:
\begin{equation}
  \xi_t = \E^{\Prob}\!\left[\left.\frac{\mathrm{d}\Q}{\mathrm{d}\Prob}
  \right|\mathcal{F}_t\right], \quad t\in[0,T],
\end{equation}
updated continuously at each time node as information accumulates. In the continuous-diffusion case,
Girsanov's theorem gives its explicit stochastic-exponential form:
\begin{equation}
  \xi_t = \exp\!\left(-\int_0^t\theta_s\,\mathrm{d}W_s^{\Prob}
  -\frac{1}{2}\int_0^t\theta_s^2\,\mathrm{d}s\right),
\end{equation}
which is a time path, not a terminal scalar.

For European option pricing, however, only the terminal value $\xi_T$ is needed:
\begin{equation}
  C(K,T) = \E^{\Q}\!\left[e^{-rT}(S_T-K)^+\right]
          = \E^{\Prob}\!\left[\xi_T\cdot e^{-rT}(S_T-K)^+\right].
  \label{eq:pricing_terminal}
\end{equation}
Since the payoff depends only on $S_T$, learning the terminal scalar $\xi_T(\omega)$ is in theory
sufficient to price \emph{the European options of that maturity}. But this word ``sufficient'' already
implies an identification boundary, which the next subsection makes precise.

\subsection{Identifiability: which part of the measure option data can determine}
\label{sec:identification}

Let $\Q\sim\Prob$, $\xi_T=\mathrm{d}\Q/\mathrm{d}\Prob>0$, $\E^{\Prob}[\xi_T]=1$. This subsection
characterizes: given a set of calibration instruments, to what extent $\xi_T$ (equivalently $\Q$) can be
determined. This provides the theoretical skeleton for the empirical picture that ``classical methods are
already optimal on the marginal problem, while neural methods show value only at the measure level.''

\begin{proposition}[European projection: only the terminal marginal is identified]
\label{prop:european_projection}
For any $\F_T$-measurable $\xi_T>0$ and any European payoff $g(S_T)$,
\begin{equation}
  \E^{\Prob}\!\big[\xi_T\,g(S_T)\big]
  = \E^{\Prob}\!\big[\bar\xi_T(S_T)\,g(S_T)\big],
  \qquad \bar\xi_T(S_T):=\E^{\Prob}[\xi_T\mid S_T].
\end{equation}
Hence all European option prices depend only on the terminal marginal density ratio
$\bar\xi_T=\mathrm{d}\Q_{S_T}/\mathrm{d}\Prob_{S_T}$; two samples with the same terminal $S_T$ but
different paths cannot be distinguished by any European option.
\end{proposition}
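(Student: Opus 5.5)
The argument rests on the tower property of conditional expectation, conditioning on $\sigma(S_T)$. Fix $\xi_T>0$ that is $\F_T$-measurable with $\E^{\Prob}[\xi_T]=1$, and a European payoff $g(S_T)$. We need $\xi_T g(S_T)$ to be $\Prob$-integrable, i.e.\ $g$ is $\Q$-integrable. We first treat $g\ge 0$, or $g$ bounded, and then extend by linearity (splitting $g=g^+-g^-$) together with monotone convergence. Since $\xi_T\ge 0$ is integrable, $\bar\xi_T:=\E^{\Prob}[\xi_T\mid S_T]$ is well defined. By the Doob--Dynkin lemma it can be written as a Borel function $\bar\xi_T(s)$ evaluated at $S_T$.

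\textbf{Step 1 (the identity).} Because $g(S_T)$ is $\sigma(S_T)$-measurable, we can take it out of the conditional expectation (for nonnegative $g$ this needs no integrability). This gives
\begin{equation*}
\E^{\Prob}[\xi_T g(S_T)]=\E^{\Prob}\big[\E^{\Prob}[\xi_T g(S_T)\mid S_T]\big]=\E^{\Prob}\big[g(S_T)\,\E^{\Prob}[\xi_T\mid S_T]\big].
\end{equation*}
The general integrable case then follows by subtracting the identities for $g^+$ and $g^-$.

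\textbf{Step 2 (identifying $\bar\xi_T$ as the marginal density ratio).} Take $g=\mathbf 1_A$ for a Borel set $A$. Then
\begin{equation*}
\Q(S_T\in A)=\E^{\Prob}[\xi_T\mathbf 1_A(S_T)]=\int_A \bar\xi_T(s)\,\Prob_{S_T}(\mathrm ds).
\end{equation*}
Hence $\Q_{S_T}\ll\Prob_{S_T}$, and $\bar\xi_T=\mathrm d\Q_{S_T}/\mathrm d\Prob_{S_T}$ $\Prob_{S_T}$-a.s., by uniqueness of the Radon--Nikodym derivative. Since $\xi_T>0$, we also have $\bar\xi_T>0$ a.s., so the two marginals are equivalent. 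The conclusion that European prices depend only on the marginal then follows: two measures $\Q,\Q'$ with the same law of $S_T$ have equal $\bar\xi_T$ a.s., and so by Step 1 they give equal prices for every European $g$. In particular this holds for calls, via (\ref{eq:pricing_terminal}).

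\textbf{Step 3 (indistinguishability of paths).} We formalize the last sentence of the statement as follows. Take any $\xi_T'$ with $\E^{\Prob}[\xi_T'\mid S_T]=\E^{\Prob}[\xi_T\mid S_T]$. An example is $\xi_T'=\xi_T+\eta$, where $\eta$ is bounded with $\E^{\Prob}[\eta\mid S_T]=0$ and $|\eta|$ is small enough that $\xi_T'$ stays positive. Then $\xi_T'$ produces identical prices for all European payoffs. It can, however, reweight differently two sets of paths that share the same terminal value. Such $\eta$ exists whenever $\F_T$ is strictly larger than $\sigma(S_T)$ up to null sets.

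The only delicate point is integrability and measurability. We need $\bar\xi_T$ to be a genuine function of $S_T$ (Doob--Dynkin), and we need to justify pulling out $g(S_T)$ when $g$ is unbounded, such as a call payoff. I would handle this through the nonnegative case and monotone convergence, which covers calls directly since they are nonnegative. Everything else is routine.
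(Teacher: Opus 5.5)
Your proof is correct and follows the paper's argument: condition on $\sigma(S_T)$ and pull $g(S_T)$ out via the tower property. Your Steps 2--3 also supply the identification $\bar\xi_T=\mathrm{d}\Q_{S_T}/\mathrm{d}\Prob_{S_T}$ and the indistinguishability claim, which the paper only asserts. One small fix in Step 3: an additive bounded $\eta$ need not keep $\xi_T+\eta$ positive when $\xi_T$ is not bounded away from zero, so make the perturbation multiplicative, e.g.\ $\xi_T'=\xi_T\bigl(1+\epsilon(\zeta-\E^{\Q}[\zeta\mid S_T])\bigr)$ with $\zeta$ bounded, not $\sigma(S_T)$-measurable, and $\epsilon$ small.
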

\begin{proof}
Take the conditional expectation with respect to $\sigma(S_T)$ and use the tower property:
$\E^{\Prob}[\xi_T g(S_T)]=\E^{\Prob}\big[\E^{\Prob}[\xi_T\mid S_T]\,g(S_T)\big]$. \
\end{proof}

\begin{proposition}[Partial identification and the identification wall]
\label{prop:partial_id}
Given call prices at finitely many strikes $\{K_j\}_{j=1}^m$ on maturity $T$, the terminal pricing
measures compatible with $\bar\xi_T$ form a \emph{convex set}; the price of any claim not lying in
$\mathrm{span}\{1,\,S_T,\,(S_T-K_j)^+\}$ is \emph{not determined}. In particular, deep out-of-the-money
($K$ beyond the observed strike range) and \emph{any claim depending on more than the terminal marginal}
(path-dependent / joint-distribution claims) are underidentified. When strikes form a continuum, the
terminal marginal $\Q_{S_T}$ is fully identified by Breeden--Litzenberger, but the path/joint structure
is still not determined.
\end{proposition}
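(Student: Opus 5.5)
The plan is to read the proposition as a statement about a linear moment problem on the terminal law and argue each clause in turn. First, by Proposition~\ref{prop:european_projection}, every calibration price $C(K_j,T)=e^{-rT}\E^{\Prob}[\xi_T(S_T-K_j)^+]$, together with normalization $\E^{\Prob}[\xi_T]=1$ and the forward/martingale condition $\E^{\Prob}[\xi_T e^{-rT}S_T]=S_0$, depends on $\xi_T$ only through $\bar\xi_T$. Equivalently, it depends only through the terminal law $\Q_{S_T}$. The compatible set is therefore
\begin{equation*}
\mathcal{M}=\Big\{\mu\ \text{prob.\ on}\ (0,\infty),\ \mu\sim\Prob_{S_T}:\ \textstyle\int 1\,d\mu=1,\ \int s\,d\mu=e^{rT}S_0,\ \int (s-K_j)^+d\mu=e^{rT}C(K_j,T)\Big\}.
\end{equation*}
Convexity is then immediate. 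Each constraint is linear in $\mu$, equivalence to $\Prob_{S_T}$ is preserved under convex combinations, and in density-ratio form $\bar\xi=\lambda\bar\xi^{(1)}+(1-\lambda)\bar\xi^{(2)}>0$.

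For non-identification, take a payoff $h(S_T)$ outside $V=\mathrm{span}\{1,s,(s-K_j)^+\}$. I would use a perturbation argument. Start from a strictly positive element $\bar\xi^{(0)}$ of $\mathcal{M}$; the true $\Q$ supplies one. Find a bounded function $\eta(s)$ with
\begin{equation*}
\E^{\Prob}[\eta v(S_T)]=0\ \ \text{for all } v\in V, \qquad \E^{\Prob}[\eta h(S_T)]\neq0.
\end{equation*}
Such an $\eta$ exists by finite-dimensional linear algebra or Hahn--Banach in $L^2(\Prob_{S_T})$, provided $h$ is not $\Prob_{S_T}$-a.s.\ equal to an element of $V$. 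One can take $\eta$ to be the component of $h$ orthogonal to $V$, truncated if $h$ is unbounded, as long as the support of $S_T$ is rich enough; that is a nondegeneracy assumption I would state explicitly. Then $\bar\xi^{(0)}+\epsilon\eta$ satisfies all constraints. For small $\epsilon$ it stays strictly positive if $\eta$ is bounded and $\bar\xi^{(0)}$ is bounded away from $0$ on the support of $\eta$; I would localize $\eta$ to a compact region to secure this. Yet the price of $h$ moves by $\epsilon\,\E^{\Prob}[\eta h]\neq0$.

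The deep OTM case is a special instance. For $K>\max_j K_j$, the function $(s-K)^+$ is nonzero only on $(K,\infty)$, where every element of $V$ is affine. So $(s-K)^+\notin V$, and I would choose $\eta$ supported in $(\max_j K_j,\infty)$ and orthogonal to $1$ and $s$ there, such as a localized second-difference bump. The wing $K<\min_j K_j$ is handled by the same argument. For path-dependent claims $X$, I would keep $\bar\xi_T$ fixed and perturb the conditional part: set $\xi_T'=\xi_T(1+\epsilon\zeta)$ with $\zeta$ bounded, $\E^{\Prob}[\xi_T\zeta\mid S_T]=0$, and $\E^{\Prob}[\xi_T\zeta X]\neq0$. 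Proposition~\ref{prop:european_projection} then leaves all European prices unchanged, while the price of $X$ changes. Such a $\zeta$ exists whenever $X$ is not $\sigma(S_T)$-measurable: take $\zeta$ proportional to a truncation of $X-\E^{\Q}[X\mid S_T]$. The continuum case follows from Breeden--Litzenberger: knowing $C(K,T)$ for all $K$ determines $\Q_{S_T}$ through $\partial_K^2C$. The same conditional perturbation still applies, because it never touches $\Q_{S_T}$.

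The main obstacle is making "not determined" rigorous while keeping the perturbed measure an ELMM. Terminal constraints give $\E^{\Q}[e^{-rT}S_T]=S_0$, but intermediate martingale conditions under the perturbed $\xi_T'$ require $\E^{\Prob}[\xi_T'S_T\mid\F_t]=\xi'_t S_t e^{rt}$. I would handle this by choosing $\zeta$ so that the perturbation is orthogonal to the discounted-price martingale at all dates, or more simply by stating the proposition for terminal pricing measures, which is how it is phrased. I would also flag that the notion of "lying in the span" should be read $\Prob_{S_T}$-a.s. and under a full-support nondegeneracy assumption.
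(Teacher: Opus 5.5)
Your proposal is correct and follows the same route as the paper's sketch. The calibration constraints are finitely many linear equalities in $\bar\xi_T$, so the compatible set is convex; any claim outside the span can be repriced by perturbing within that set; and Proposition~\ref{prop:european_projection} disposes of path claims. You also supply details the paper omits: a bounded perturbation preserving positivity, a nondegeneracy condition on the support of $S_T$, an explicit bump for out-of-range strikes, and the conditional perturbation $\xi_T(1+\epsilon\zeta)$ for path claims. In that last step, recentre the truncation by subtracting its $\Q$-conditional mean given $S_T$, so that $\E^{\Q}[\zeta\mid S_T]=0$ actually holds. Your caveat that the perturbed measure need not keep $e^{-rt}S_t$ a martingale at intermediate dates is a genuine point the paper also leaves unaddressed. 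Restricting to terminal pricing measures, as the statement is phrased, matches the paper's level of rigor.
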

\begin{proof}[Proof sketch]
The price constraints are finitely many linear equalities in $\bar\xi_T$, and their solution set is
convex. The price of a claim $\E^{\Prob}[\bar\xi_T\,\phi]$ is uniquely determined by these constraints if
and only if $\phi$ lies in the linear span of the constraint functionals; otherwise one can vary
$\bar\xi_T$ within the solution set without changing the calibration prices yet changing the price of that
claim. Proposition~\ref{prop:european_projection} has already shown that path information does not enter
$\bar\xi_T$ at all. \
\end{proof}

\begin{proposition}[Risk-specific identification: recovery with matched instruments]
\label{prop:risk_specific}
Let the calibration instruments (claims) $\{h_i\}$ linearly span the subspace $V$. Calibration uniquely
determines the $\Q$-pricing functional $\phi\mapsto\E^{\Q}[e^{-r\cdot}\phi]$ on $V$: claims in $V$ are
priced exactly, and those outside $V$ are not determined. Hence, to identify a certain type of structure,
one must introduce instruments that \emph{span that structure}---second-moment (variance) swaps identify
terminal variance (the tail of the volatility risk premium), third-moment (skewness) swaps identify jump
asymmetry (the tail of the jump risk premium), and path-dependent instruments (such as forward-start)
identify the corresponding joint/path functionals.
\end{proposition}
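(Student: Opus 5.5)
The plan is to treat the pricing functional as a linear functional on a space of claims and argue in two directions, following the pattern of Proposition~\ref{prop:partial_id}. Fix the ambient space of discounted $\F_T$-measurable claims $\phi$ that are integrable under every candidate $\Q$. Write $\Pi_\Q(\phi):=\E^{\Q}[e^{-rT}\phi]=\E^{\Prob}[\xi_T e^{-rT}\phi]$ using \eqref{eq:cme}. The key observation is that $\Pi_\Q$ is linear in $\phi$ for each fixed $\Q$. The calibration data are the numbers $\Pi_\Q(h_i)$; the measures compatible with them are $\mathcal{C}=\{\Q\in\mathcal{Q}:\Pi_\Q(h_i)=p_i\ \forall i\}$.

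\textbf{Step 1 (determination on $V$).} Any $\phi\in V$ has the form $\sum_i c_i h_i$, a finite combination. By linearity, $\Pi_\Q(\phi)=\sum_i c_i p_i$ for every $\Q\in\mathcal{C}$, so the price is exact and independent of the choice of $\Q$. Well-definedness when the $h_i$ are linearly dependent follows from the same linearity. If the constant claim and the discounted underlying are not already among the $h_i$, I will adjoin them, since normalization $\E^{\Prob}[\xi_T]=1$ and the martingale condition price them for every ELMM.

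\textbf{Step 2 (non-determination off $V$).} This is the main obstacle, because it needs a genuine perturbation argument and some regularity assumption. Take $\phi\notin V$ and some $\Q_0\in\mathcal{C}$ with density $\xi_0$. I would work in $L^2(\Q_0)$ and assume the $h_i$ and $\phi$ lie there. Project $\phi$ orthogonally onto $V$ (a finite-dimensional, hence closed, subspace) to get a residual $\eta\neq0$ with $\E^{\Q_0}[\eta h]=0$ for all $h\in V$. Here $V$ contains $1$ and the discounted $S_T$; for the local martingale constraint at intermediate dates I would also include $\tilde S_t$ for the relevant $t$ in $V$.

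Next truncate to keep positivity: replace $\eta$ by a bounded version. The truncation may break orthogonality, so I would re-project a bounded approximant $\eta_n$ onto $V^\perp$. Since $\eta\neq 0$, we have $\E^{\Q_0}[\eta_n\phi]\to\E^{\Q_0}[\eta\phi]=\|\eta\|^2>0$. Then set $\xi_\epsilon=\xi_0(1+\epsilon\eta_n)$ for small $|\epsilon|$, which stays strictly positive with mean one. The measure $\Q_\epsilon$ is equivalent to $\Prob$ and reproduces all calibration prices. Its price of $\phi$ differs from that under $\Q_0$ by $\epsilon\,\E^{\Q_0}[\eta_n\phi]\neq0$. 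The delicate point is preserving the local-martingale property of the full process rather than only terminal moments. I would handle this either by restricting the perturbation to $\sigma(S_{t_1},\dots,S_{t_n})$-measurable functions orthogonal to the martingale increments, or by stating the result, as the paper does in Proposition~\ref{prop:partial_id}, at the level of the pricing functional on the span including these increments.

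\textbf{Step 3 (examples).} Instantiate $V$. For variance swaps, the payoff is $(\log S_T/S_0)^2$ or realized variance, which enters $V$ exactly. Skewness swaps contribute a third-moment functional. Forward-start payoffs $(S_{T_2}-kS_{T_1})^+$ are not measurable with respect to $S_{T_2}$ alone. By Proposition~\ref{prop:european_projection} they lie outside the European span, so only instruments of that type place them in $V$. These are interpretive consequences of Steps 1--2 rather than further theorems.
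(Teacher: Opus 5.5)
Your route is the one the paper has in mind. The paper gives no proof of this proposition; the only argument it offers is the span-and-perturbation sketch for Proposition~\ref{prop:partial_id}, which you reproduce in more detail. Step~1 is correct. Step~2, however, has one step that fails as written and one issue that changes what can be proved.

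\textbf{The re-projection destroys positivity.} After truncating, you re-project $\eta_n$ onto $V^\perp$ in $L^2(\Q_0)$, i.e.\ you subtract $P_V\eta_n=\sum_i a_ih_i$. The $h_i$ are unbounded ($\tilde S_T$, calls, $(\log S_T/S_0)^2$, $(\log S_T/S_0)^3$), so the corrected $\eta_n$ is unbounded and $\xi_0(1+\epsilon\eta_n)>0$ can fail. For a skew-swap payoff, or for a combination such as $a(\log S_T/S_0)^2-bS_T$ with $a,b>0$, it is unbounded in both directions, so no $\epsilon\neq0$ works. These are precisely the instruments the proposition is about.

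The fix is to drop projection and assume only $h_i,\phi\in L^1(\Q_0)$. Consider the functionals $f(\eta)=\E^{\Q_0}[\eta\phi]$ and $f_i(\eta)=\E^{\Q_0}[\eta h_i]$ on $L^\infty(\Q_0)$. If $\bigcap_i\ker f_i\subseteq\ker f$, then $f=\sum_i c_if_i$ by finite-dimensional linear algebra. Testing with $\eta=\mathrm{sign}(\phi-\sum_ic_ih_i)$ then gives $\phi=\sum_ic_ih_i$ a.s. Hence $\phi\notin V$ yields a \emph{bounded} $\eta$ with $\E^{\Q_0}[\eta h_i]=0$ for all $i$ and $\E^{\Q_0}[\eta\phi]\neq0$. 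Because $1\in V$, this $\eta$ has mean zero, and $\xi_0(1+\epsilon\eta)$ with $|\epsilon|<1/\|\eta\|_\infty$ finishes the argument. Equivalently, you can correct $\eta_n$ with a bounded biorthogonal family $g_j$, $\E^{\Q_0}[g_jh_i]=\delta_{ij}$, instead of using the orthogonal projection.

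\textbf{The martingale issue cannot be patched; it changes the statement.} If $\Q$ must be a genuine ELMM, the claim that everything outside $V$ is undetermined is false. Take $V$ spanned by $1$, $\tilde S_{T_2}$ and calls at $T_2$, in a model where $S_{T_1}$ is not a function of $S_{T_2}$. Then $\tilde S_{T_1}\notin V$, yet every calibrated ELMM prices it at $S_0$. The reason is that the positive local martingale $\tilde S$ is a supermartingale, so $S_0\ge\E^{\Q}[\tilde S_{T_1}]\ge\E^{\Q}[\tilde S_{T_2}]=S_0$. More generally, every gain $\int H\,\mathrm{d}\tilde S$ is pinned down, and in a complete model every claim is.

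Your first remedy, perturbations orthogonal to all increments $\mathbf{1}_A(\tilde S_{t_{k+1}}-\tilde S_{t_k})$ with $A\in\F_{t_k}$, amounts to replacing $V$ by $V+G$, where $G$ is the infinite-dimensional space of trading gains. The finite-dimensional lemma then no longer applies. The non-determination half becomes the theorem that a claim not attainable by semi-static hedging has a nondegenerate interval of arbitrage-free prices, which needs closedness of $V+G$ and a Hahn--Banach separation.

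So the proposition must be read the way the paper uses it: $\Q$ ranges over measures equivalent to $\Prob$ that reproduce the finitely many prices of the $h_i$, with $1$ and $\tilde S_T$ among them. In that reading, your Steps~1--2 with the fix above are a complete proof.

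A minor point on Step~3: once calls at $T_1$ are also calibrated, being non-measurable in $S_{T_2}$ alone does not put the forward-start outside $V$. The span is then $\{f_1(S_{T_1})+f_2(S_{T_2})\}$. What you need is that $(S_{T_2}/S_{T_1}-1)^+$ is not additively separable in $(S_{T_1},S_{T_2})$, which Proposition~\ref{prop:european_projection}, a single-maturity statement, does not supply.
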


\begin{proposition}[Measure self-consistency: a single measure vs.\ per-marginal]
\label{prop:consistency}
A $\xi_T$ defined on path space induces a \emph{single} measure $\Q$, which gives no-arbitrage, mutually
consistent prices for \emph{any} claim (including multi-maturity, path-dependent claims). Conversely,
marginal density ratios $\{\bar\xi_{T_j}\}$ calibrated separately for each maturity do not in general come
from the same path-space measure (unless the additional consistency constraint
$\E^{\Prob}[\xi_{T_{j+1}}\mid\F_{T_j}]=\xi_{T_j}$ is imposed), and thus cannot price claims depending on
the joint distribution.
\end{proposition}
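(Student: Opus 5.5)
The statement has two halves, and I will prove them separately.

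\emph{Forward direction (a path-space $\xi_T$ gives a single, consistent measure).} Given an $\F_T$-measurable $\xi_T>0$ with $\E^{\Prob}[\xi_T]=1$, I define $\Q(A):=\E^{\Prob}[\xi_T\mathbf 1_A]$ for $A\in\F_T$.
\begin{enumerate}
  \item This is a probability measure equivalent to $\Prob$, because $\xi_T>0$.
  \item Every claim $X$, whatever its dependence on the path or on several maturities, is then priced by the one linear functional $X\mapsto\E^{\Q}[e^{-rT}X]=\E^{\Prob}[\xi_T e^{-rT}X]$, using the change-of-measure identity \eqref{eq:cme}. Prices of different claims are therefore mutually consistent: they are additive, positive, and agree with the tower property across maturities.
  \item Freedom from arbitrage follows once $\Q$ is an ELMM. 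Under the standing requirement that $\xi_T$ enforces the martingale condition on $e^{-rt}S_t$, the set $\mathcal{Q}$ is nonempty, and the Delbaen--Schachermayer theorem gives NFLVR for the extended market in which every claim trades at its $\Q$-price.
  \item The intermediate marginals come out automatically. Setting $\xi_t:=\E^{\Prob}[\xi_T\mid\F_t]$ gives a $\Prob$-martingale. The marginal density ratio at each $T_j$ is $\E^{\Prob}[\xi_{T_j}\mid S_{T_j}]$, which is Proposition~\ref{prop:european_projection} applied at $T_j$. So all maturities are read off one object.
\end{enumerate}

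\emph{Converse direction (separate per-maturity calibration need not be consistent).} I will first show that the consistency condition is necessary and then give a counterexample.
\begin{itemize}
  \item \emph{Necessity.} Suppose the separately calibrated $\{\bar\xi_{T_j}\}$ came from one path measure with density process $\xi_t$. Each $\bar\xi_{T_j}$ must then be the $\sigma(S_{T_j})$-projection of $\xi_{T_j}$, and the $\xi_{T_j}$ must satisfy $\E^{\Prob}[\xi_{T_{j+1}}\mid\F_{T_j}]=\xi_{T_j}$. In particular $\E^{\Prob}[\bar\xi_{T_{j+1}}(S_{T_{j+1}})\mid\F_{T_j}]$ would have to be compatible with $\bar\xi_{T_j}$ after projection.
  \item \emph{Counterexample.} I will use a two-period binomial or trinomial tree, or a finite scenario set under $\Prob$. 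I choose separate marginal reweightings at $T_1$ and $T_2$, each matching its own call prices and the martingale constraint $\E[\bar\xi_{T_j}e^{-rT_j}S_{T_j}]=S_0$. I then check that $\E^{\Prob}[\bar\xi_{T_2}\mid\F_{T_1}]\neq\bar\xi_{T_1}$. Concretely, I would take a trinomial $T_1$ and $T_2$ tree and pick $\bar\xi_{T_2}$ as a function of $S_{T_2}$ only. Its conditional expectation given $S_{T_1}$ is then a fixed function of $S_{T_1}$, and I would choose $\bar\xi_{T_1}$ to be a different function that still reprices the $T_1$ calls. The underdetermination shown in Proposition~\ref{prop:partial_id} guarantees that such a choice exists.
  \item \emph{Failure to price joint claims.} By Proposition~\ref{prop:partial_id}, a joint claim such as a forward-start payoff $(S_{T_2}/S_{T_1}-1)^+$ lies outside $\mathrm{span}\{1,S_{T_j},(S_{T_j}-K)^+\}$ for every $j$. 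So even consistent marginals do not pin down its price. With inconsistent marginals there is no measure at all under which the expectation is defined: the weight to attach to a path is ambiguous, since $\bar\xi_{T_1}$ and $\bar\xi_{T_2}$ assign different weights to the same path. Exhibiting two compatible path measures that give different forward-start prices, again in the tree, makes this explicit.
\end{itemize}

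\emph{Main obstacle.} The forward direction is routine. The hard part is interpreting the converse rigorously: ``do not in general come from the same measure'' is an existence-of-counterexample claim, and ``cannot price joint claims'' must be read as non-identification, not as impossibility. The concrete finite tree should settle both points. I expect the only delicate step to be choosing the numbers so that every marginal reprices its calls and satisfies the martingale constraint while the consistency condition still fails. I would do this by perturbing $\bar\xi_{T_1}$ along a direction orthogonal to $\{1,S_{T_1},(S_{T_1}-K_j)^+\}$, which exists once the scenario set has more states than constraints.
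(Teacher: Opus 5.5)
The paper states this proposition without proof, so your argument has to stand on its own. The forward half is fine, provided you keep explicit that the no-arbitrage claim needs $\Q$ to be an ELMM on all of $[0,T]$. That includes $\E^{\Q}[e^{-rT_2}S_{T_2}\mid\F_{T_1}]=e^{-rT_1}S_{T_1}$ at intermediate dates, which is an extra hypothesis: neither an arbitrary path-space $\xi_T$ nor the single terminal constraint $\E^{\Prob}[\xi R_S]=1$ supplies it.

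\textbf{The gap is in the converse.} Your necessity bullet starts correctly: a common measure means some $\xi>0$ with $\E^{\Prob}[\xi\mid S_{T_j}]=\bar\xi_{T_j}$ for every $j$. The tower identity for $\xi_t$ then holds automatically and carries no information. But the last sentence of that bullet does not follow, because $\E^{\Prob}\big[\E^{\Prob}[\xi\mid S_{T_2}]\mid S_{T_1}\big]\neq\E^{\Prob}[\xi\mid S_{T_1}]$ in general, since $\sigma(S_{T_1})\not\subset\sigma(S_{T_2})$.

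Your counterexample only checks $\E^{\Prob}[\bar\xi_{T_2}\mid S_{T_1}]\neq\bar\xi_{T_1}$. That shows merely that the particular measure $\bar\xi_{T_2}\cdot\Prob$ misprices the $T_1$ calls. It says nothing about densities $\xi$ that are not functions of $S_{T_2}$. Two examples make this concrete.
\begin{itemize}
\item If $S_{T_1}$ and $S_{T_2}$ were $\Prob$-independent, then $\xi=\bar\xi_{T_1}(S_{T_1})\,\bar\xi_{T_2}(S_{T_2})$ reproduces both marginals, although $\E^{\Prob}[\bar\xi_{T_2}\mid S_{T_1}]=1$.
\item Whenever the $\Prob$-law of $(S_{T_1},S_{T_2})$ has a positive density (as in Merton and Heston), reweighting it to the product of the two calibrated marginals gives a $\Q\sim\Prob$ with exactly those marginals. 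This is precisely the independence proxy used by WMC-per-maturity.
\end{itemize}
So your claim that with inconsistent marginals ``there is no measure at all'' is false. In your trinomial tree the construction also fails for small perturbations. The unperturbed pair is coupled by the strictly positive measure $\bar\xi_{T_2}\cdot\Prob$, and the tree's bipartite support graph is connected. Hence every sufficiently small mass-preserving perturbation of the $T_1$ marginal is still coupled by a strictly positive measure. Your criterion is exact only in a scenario pool where $S_{T_1}$ and $S_{T_2}$ each separate all scenarios, and there the statement is trivial.

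What your check does prove is the weak reading suggested by the paper's parenthetical: the per-maturity reweightings, viewed as path measures, are not restrictions of one another. For the strong reading you must violate a \emph{necessary} condition for a coupling. One option is a Hall-type support condition in the tree. The other, which is the only nontrivial version in full-support models, is to require the common measure to be a martingale measure and break Strassen's convex order. Concretely, use the freedom of Proposition~\ref{prop:partial_id} at an uncalibrated strike to force $C(K,T_1)>C(Ke^{r(T_2-T_1)},T_2)$.

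Likewise, ``cannot price joint claims'' is best made rigorous in two steps.
\begin{itemize}
\item Note that the independence pairing is never a martingale coupling for non-degenerate $S_{T_1}$, since $\E[S_{T_2}\mid S_{T_1}]$ is constant under it.
\item Exhibit two martingale couplings of the same marginals with different forward-start prices.
\end{itemize}
The relevant span for this non-identification is that of the union over $j$, together with the martingale functionals $g(S_{T_1})(e^{-rT_2}S_{T_2}-e^{-rT_1}S_{T_1})$. It is not each $\mathrm{span}\{1,S_{T_j},(S_{T_j}-K)^+\}$ taken separately, as you wrote.
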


\begin{remark}[Correspondence of theory and experiment]
The above propositions provide the skeleton for the empirical picture below, with three correspondences.
Propositions~\ref{prop:european_projection} and~\ref{prop:partial_id} show that European prices identify
only the terminal marginal and that deep out-of-the-money is underidentified. This is precisely the
identification wall observed in Sections~\ref{sec:wmc_baseline}, \ref{sec:xinet_vs_wmc}: since reliable
extrapolation is impossible, minimum-relative-entropy WMC is already the optimal baseline.
Proposition~\ref{prop:risk_specific} gives ``risk-specific identification''---to recover a certain type of
structure, one must introduce instruments that span it, corresponding to the matched moment instruments
recovering the tail in Section~\ref{sec:wmc_baseline}. Proposition~\ref{prop:consistency} is the
theoretical basis of the main positive result (Section~\ref{sec:measure_identification}): per-marginal
calibration does not in general come from the same measure and cannot price path-dependent claims, whereas
a single $\xi$ on path space can. When the true $\xi$ is a smooth functional of the path, the
parameterization $\xi=f_\theta(\text{path features})$ is a \emph{correctly specified} structural prior,
and thus recovers the joint structure that European marginals cannot constrain, outperforming the
reweighting that takes maximum entropy over the same marginal.
\end{remark}

XiNet therefore uses path summary statistics as features to parameterize a path-level positive weight
$\xi(\omega)$, which can both match the marginal when only European constraints are available and identify
the joint structure when path instruments are given.

\section{Method: neural networks learning $\xi=\mathrm{d}\Q/\mathrm{d}\Prob$}
\label{sec:method}

\subsection{Problem formulation}

Given $n$ terminal-price paths $\{S_T^{(i)}\}_{i=1}^n$ simulated under the $\Prob$ measure, with
corresponding path feature vectors $\phi_i\in\R^d$, and benchmark call prices at $m$ training strikes
$\{C(K_j,T)\}_{j=1}^m$ (given by the analytical or numerical formula of the chosen ELMM), our goal is to
learn a function $\xi_\theta:\R^d\to\R_{>0}$ such that the weighted expectation using $\xi_\theta$ as the
Radon--Nikodym derivative is consistent with the $\Q$-measure pricing at the training strikes.

At inference, the network output is normalized:
$\tilde\xi_i = \xi_{\theta,i} \big/ \frac{1}{n}\sum_j \xi_{\theta,j}$, thereby enforcing
$\frac{1}{n}\sum_i \tilde\xi_i = 1$ (the sample mean equals $1$, consistent with the total-mass
constraint $\E^{\Prob}[\xi]=1$), and the option price is estimated with the weighted mean.

\subsection{The return-matching constraint}
\label{sec:return_def}

\paragraph{Definition of return.}
For the underlying asset and a call option with maturity $T$, define the \textbf{normalized discounted
return}:
\begin{align}
  R_S &:= e^{-rT}\,\frac{S_T}{S_0},
  \label{eq:return_stock}\\
  R_C(K_j) &:= e^{-rT}\,\frac{(S_T-K_j)^+}{C(K_j,T)},
  \quad j=1,\ldots,m.
  \label{eq:return_call}
\end{align}
$R_S$ is the normalized total return of holding the discounted stock from $0$ to $T$; $R_C(K_j)$ is the
normalized total return of holding one call option to maturity. Normalization is relative to the current
price ($S_0$ or $C(K_j,T)$), so that the constraint targets across different maturities and strikes are
unified into
\begin{equation}
  \E^{\Q}[R_S] = 1, \qquad \E^{\Q}[R_C(K_j)] = 1.
  \label{eq:return_target}
\end{equation}
The first equality is the martingale condition of the discounted stock under no dividends
$e^{-rT}\E^{\Q}[S_T] = S_0$; the second is the no-arbitrage pricing
$e^{-rT}\E^{\Q}[(S_T-K_j)^+] = C(K_j,T)$.

\paragraph{From $\Q$ to $\Prob$.}
Using $\E^{\Q}[\cdot] = \E^{\Prob}[\xi\cdot]$, Eq.~\eqref{eq:return_target} is equivalent to
\begin{align}
  \E^{\Prob}\!\left[\xi_\theta \cdot R_S\right] &= 1,
  \label{eq:mart_return}\\
  \E^{\Prob}\!\left[\xi_\theta \cdot R_C(K_j)\right] &= 1,
  \quad j=1,\ldots,m.
  \label{eq:call_return}
\end{align}
Eqs.~\eqref{eq:mart_return}--\eqref{eq:call_return} are the training targets of XiNet: using path samples
under the $\Prob$ measure, soft constraints force $\xi_\theta$ to satisfy the above expectation
equalities.

\subsection{Loss function}

Writing Eqs.~\eqref{eq:mart_return}--\eqref{eq:call_return} into the loss as weighted squared errors:
\begin{equation}
  \mathcal{L}(\theta) =
  \lambda_S
  \underbrace{\left(
    \frac{1}{n}\sum_i \xi_{\theta,i}\, R_{S,i} - 1
  \right)^{\!2}}_{\mathcal{L}_S}
  +\;
  \frac{\lambda_C}{m}
  \sum_{j=1}^m
  \underbrace{\left(
    \frac{1}{n}\sum_i \xi_{\theta,i}\, R_{C,i}(K_j) - 1
  \right)^{\!2}}_{\mathcal{L}_{C,j}},
  \label{eq:loss_return}
\end{equation}
where $\xi_\theta$ is the unnormalized network output, with hyperparameters $\lambda_S=5$, $\lambda_C=15$.
Note that $\E^\Prob[\xi_\theta]=1$ is not written into the gradient terms but is realized at inference
through post-hoc normalization (see \S\ref{sec:inference}).

The normalization design (with $1$ as the unified target) avoids the gradient imbalance that arises when
targeting price levels directly, since deep out-of-the-money option prices are extremely small.

\subsection{Network architecture}

$\xi_\theta$ is parameterized by a fully connected feedforward network:
\begin{equation}
  \xi_\theta(\phi) = \mathrm{softplus}\bigl(f_\theta(\phi) + 0.5\bigr),
  \quad \mathrm{softplus}(x) = \log(1+e^x),
\end{equation}
strictly guaranteeing output positivity. The network architecture is $d\to160\to128\to96\to64\to1$
($d=8$-dimensional feature input), each hidden layer containing BatchNorm + ReLU, with parameters
initialized by Xavier initialization (gain $0.5$), so that the initial output concentrates around $1$. The
path features $\phi_i$ are standardized within each training batch (subtract the in-batch mean, divide by
the in-batch standard deviation).

\subsection{Training}

\begin{algorithm}[H]
  \caption{XiNet training}
  \label{alg:training}
  \begin{algorithmic}[1]
    \REQUIRE maturity $T$, training strikes $\{K_j\}_{j=1}^m$,
             hyperparameters $\lambda_S,\lambda_C$,
             number of epochs $N_{\mathrm{epoch}}$, number of paths $n$
    \ENSURE optimal parameters $\theta^*$
    \STATE compute the benchmark price $C_j = C(S_0,K_j,T)$ at each training strike
    \FOR{$e = 1,\ldots,N_{\mathrm{epoch}}$}
      \STATE resample $n$ paths from $\Prob$, extract and standardize features $\Phi\in\R^{n\times d}$
      \STATE forward pass: $\boldsymbol\xi = \xi_\theta(\Phi)\in\R^n_+$
      \STATE compute $\mathcal{L}(\theta)$ by Eq.~\eqref{eq:loss_return}
      \STATE gradient clipping ($\|\nabla\|\leq 1$), AdamW update, cosine-annealing learning rate
      \IF{$\mathcal{L}(\theta) < \mathcal{L}^*$}
        \STATE $\theta^*\leftarrow\theta$, $\mathcal{L}^*\leftarrow\mathcal{L}(\theta)$
      \ENDIF
    \ENDFOR
    \RETURN $\theta^*$
  \end{algorithmic}
\end{algorithm}

Each training epoch resamples paths from $\Prob$ (online learning), preventing overfitting to a specific
path realization and improving generalization to the path distribution. Hyperparameter summary:
$n=22{,}000$ (number of training paths), $N_{\mathrm{epoch}}=3{,}000$, $\eta=1.25\times10^{-3}$ (AdamW
initial learning rate), cosine-annealed to $\eta_{\min}=10^{-6}$, gradient-clipping threshold $1.0$,
trained independently for each ELMM and maturity.

\subsection{Inference: weighted pricing}
\label{sec:inference}

For any option $(K',T)$ to be priced, generate $n_{\mathrm{test}}=55{,}000$ test paths, first normalize
\begin{equation}
  \tilde\xi_i = \frac{\xi_{\theta^*}(\phi_i)}{\dfrac{1}{n_{\mathrm{test}}}\sum_{j=1}^{n_{\mathrm{test}}}\xi_{\theta^*}(\phi_j)},
\end{equation}
then compute the weighted mean:
\begin{equation}
  \hat{C}(K',T) = e^{-rT}\,\frac{1}{n_{\mathrm{test}}}
  \sum_{i=1}^{n_{\mathrm{test}}} \tilde\xi_i\cdot(S_T^{(i)}-K')^+.
  \label{eq:pricing}
\end{equation}
Normalization ensures $\frac{1}{n}\sum_i\tilde\xi_i=1$, making Eq.~\eqref{eq:pricing} converge uniformly
to $e^{-rT}\E^{\Q}[(S_T-K')^+]$ as $n_{\mathrm{test}}\to\infty$.

\subsection{Model-free path features}
\label{sec:mf_features}

When the parametric model is known, one may use internal model variables (number of jumps, diffusion
increments, etc.) as path features. To make the same framework run without assuming an underlying model,
we construct, directly from the observable daily log-return series
$\{r_t\}_{t=1}^{T_{\mathrm{steps}}}$ ($T_{\mathrm{steps}}$ being the number of trading days corresponding
to the annualized maturity $T$, counting 252 days per year), an 8-dimensional model-free feature vector
$\phi_i\in\R^8$ (Table~\ref{tab:mf_features}).

\begin{table}[H]
  \centering
  \caption{The 8-dimensional model-free path features}
  \label{tab:mf_features}
  \small
  \begin{tabular}{clll}
    \toprule
    \# & Feature & Formula & Information captured \\
    \midrule
    1 & Realized volatility & $\sum_t r_t^2$
      & Total volatility level \\
    2 & Max single-day absolute return & $\max_t |r_t|$
      & Extreme events (jump proxy) \\
    3 & Skewness & $\E[(r-\bar r)^3]/\hat\sigma^3$
      & Distributional asymmetry \\
    4 & Excess kurtosis & $\E[(r-\bar r)^4]/\hat\sigma^4 - 3$
      & Tail thickness \\
    5 & Vol-of-Vol & $\sum_t (r_t^2 - r_{t-1}^2)^2$
      & Volatility time-variation \\
    6 & Squared-return autocorrelation & $\mathrm{corr}(r_t^2,\,r_{t-1}^2)$
      & Volatility clustering \\
    7 & Maximum drawdown & $\max_s\!\bigl(\sum_{t\leq s}r_t\bigr) - \min_s\!\bigl(\sum_{t\leq s}r_t\bigr)$
      & Path extremal behavior \\
    8 & Cumulative log-return & $\sum_t r_t = \log(S_T/S_0)$
      & Overall path drift \\
    \bottomrule
  \end{tabular}
\end{table}

The above features rely on no parametric model assumption and can be computed directly from simulated or
real market daily returns. Their role is to provide the network with a statistical basis for
distinguishing the risk characteristics of paths: if the true $\mathrm{d}\Q/\mathrm{d}\Prob$ depends on
the moment features of the path (such as the number of jumps or the degree of volatility clustering), then
the network can learn the corresponding path-weight patterns through these proxy quantities.

\section{Numerical experiments}
\label{sec:experiments}

\subsection{Experimental setup}
\label{sec:exp_setup}

For each model, the test grid consists of $5\times5=25$ $(T,\,S_0/K)$ combinations: maturities
$T\in\{1/12,\,3/12,\,6/12,\,9/12,\,1\}$ (years), moneyness $S_0/K\in\{1.15,\,1.10,\,1.00,\,0.90,\,0.85\}$.
The training strikes are a subset of the test set (the ATM and ITM grid points), and the model does not
see the true prices of deep OTM options during training. A separate XiNet is trained for each maturity,
sampling $n=22{,}000$ simulated paths per epoch and training for 3000 epochs; at inference,
$n_{\mathrm{test}}=55{,}000$ independent paths are used, and for each grid point the estimate is repeated
$n_{\mathrm{reps}}=120$ times and averaged. The error is measured by the RMSE over the 25 grid points,
$\mathrm{RMSE}=\big[\tfrac{1}{25}\sum_{i=1}^{25}(\hat C_i-C_i^{\mathrm{true}})^2\big]^{1/2}$.

This paper compares four pricing methods. BS calibration backs out the Black--Scholes implied volatility
at the training strikes, then interpolates or extrapolates to the full grid. Crude MC takes
$\hat C=e^{-rT}\E^{\Prob}[(S_T-K)^+]$, i.e., unweighted Monte Carlo under $\Prob$. EMS applies a scalar
scaling to the $\Prob$ paths so that $\E^{\Prob}[\tilde S_T]=S_0 e^{rT}$ and then prices. XiNet learns the
path-dependent $\xi=\mathrm{d}\Q/\mathrm{d}\Prob$ via a neural network from the 8-dimensional model-free
path features (Section~\ref{sec:method}). For the risk-premium settings, Section~\ref{sec:bs_greeks}
additionally introduces a BS-calibration-plus-Greeks baseline, which at the training strikes uses the
market-observable $\partial C/\partial K$ to recover the local slope of the implied volatility smile
before extrapolating.

\subsection{Marginal pricing: WMC as the main baseline and the identification wall}
\label{sec:wmc_baseline}

EMS applies only a scalar scaling to the paths and is not a full measure change; as a pricing method it is
too weak to serve as the main baseline. This paper takes weighted Monte Carlo (WMC) as the main baseline:
subject to satisfying the benchmark price constraints, it minimizes the relative entropy to the physical
prior, assigning each path a path-specific probability weight, and its dual solution is the exponential
family
\begin{equation}
  w_i(\lambda) \;=\; \frac{p_i\,\exp(\lambda^\top h_i)}{\sum_\ell p_\ell\,\exp(\lambda^\top h_\ell)},
  \label{eq:wmc_weights}
\end{equation}
where $h_i$ is the value of each benchmark instrument (discounted stock price, call payoff) on path $i$,
and $\lambda$ is obtained by a low-dimensional convex dual that makes the sample constraints hold exactly.
EMS corresponds to the special case where $w_i$ degenerates to a constant, so this paper demotes it to a
diagnostic control.%
\footnote{With the uniform prior $p_i\equiv 1/n$, Eq.~\eqref{eq:wmc_weights} is the classical
minimum-relative-entropy WMC; with the positive weights $a_i=\exp\{f_\theta(\phi_i)\}$ given by the neural
network as the prior, the same projection is the ``network prior + exact projection'' algorithm, see
Section~\ref{sec:method}.}

To distinguish whether the earlier large multiple relative to EMS comes from a methodological gap or from
interpolation, at $T=3$M and over 30 repetitions we fit WMC under two protocols for each setting: under
the FULL protocol the calibration strikes cover all test moneyness (interpolation), and under the HELD-OUT
protocol only ITM and ATM ($K\le S_0$) are used for calibration while deep OTM falls out-of-sample
(extrapolation). Table~\ref{tab:wmc_heldout} reports the relative bias at deep OTM ($S_0/K=0.85$). Under
FULL interpolation the bias of all settings is about zero: as long as the OTM strike falls within the
calibration range, WMC hits it exactly, so the earlier large multiple relative to EMS is mainly an
interpolation gap rather than extrapolation ability. Under HELD-OUT extrapolation an error appears for all
settings, most severe for the shape-risk-premium settings; the root cause is that European call prices
cannot identify the out-of-sample tail, which holds for any method, WMC included. The deep-OTM percentage
is amplified by an extremely small denominator (e.g., the HestonVRP truth is about $0.03$), so the main
text and the appendix report both the absolute error and the error normalized by vega or the bid--ask
spread.

\begin{table}[H]
  \centering
  \caption{Relative pricing bias of the WMC baseline at deep OTM ($S_0/K=0.85$, $T=3$M),
  $(\hat C - C^{\mathrm{true}})/C^{\mathrm{true}}\times100\%$, FULL (interpolation) vs.\ HELD-OUT
  (extrapolation). Shape-RP are the risk-premium settings that genuinely change the shape of the
  $\Prob$/$\Q$ distributions; drift-only is the control group in which $\Prob$ and $\Q$ share the same law
  and differ only in drift. ESS is the average effective sample proportion under HELD-OUT.}
  \label{tab:wmc_heldout}
  \small
  \begin{tabular}{llrrr}
    \toprule
    Category & Setting & FULL\% & HELD-OUT\% & ESS \\
    \midrule
    \multirow{3}{*}{Shape-RP}
      & MertonJRP & $-0.1$ & $+9.4$   & $39\%$ \\
      & HestonVRP & $+0.3$ & $+104.8$ & $92\%$ \\
      & CGMY-JRP  & $+0.1$ & $+169.3$ & $34\%$ \\
    \midrule
    \multirow{2}{*}{Drift-only}
      & CGMY pure jump & $+0.1$ & $-11.2$ & $20\%$ \\
      & Hawkes    & $+0.0$ & $-18.2$ & $93\%$ \\
    \bottomrule
  \end{tabular}
\end{table}

Under the same strict held-out protocol we can further compare XiNet with WMC. The two see exactly the
same European call constraints on the calibration side, with the only difference that XiNet additionally
sees the 8-dimensional path features; if these features carry tail information that the price constraints
could not identify, XiNet should stably outperform WMC. Table~\ref{tab:xinet_vs_wmc} gives the results for
three shape-risk-premium settings at $T=3$M, each trained independently over 6 random seeds.%
\label{sec:xinet_vs_wmc}%
The conclusions are robust to the random seed: EMS has an extremely large bias ($-69\%$ to $+3640\%$) and
can only serve as a diagnostic control; XiNet does not consistently outperform WMC---on MertonJRP WMC
stably wins, with the $\pm1$ standard-deviation bands of the two fully separated, and XiNet instead
underestimates the jump tail by about $28\%$; on HestonVRP XiNet has a better mean but a large variance
(about $\pm40\%$); and on CGMY-JRP the two are close and both markedly deviate from the truth. When all
calibration instruments are European functions of $S_T$, the path features provide only an inductive bias
correlated with the data-generating process, not identification of the out-of-sample tail, consistent
with Section~\ref{sec:terminal} (European pricing depends only on the terminal $S_T$). To stably surpass
WMC, one must introduce identifying instruments that can see the path (such as VIX, variance swaps,
multi-observation-day constraints), which is also the empirical justification for this paper's dynamic
ELMM framework.

\begin{table}[H]
  \centering
  \caption{Relative pricing bias at deep OTM for XiNet ($6$ random seeds), WMC, and EMS under strict
  held-out (calibration only with $K\le S_0$), $(\hat C - C^{\mathrm{true}})/C^{\mathrm{true}}\times100\%$,
  $T=3$M. XiNet reports the mean $\pm$ standard deviation over $6$ seeds; WMC/EMS report the mean $\pm$
  standard deviation over $20$ resamplings. ``Verdict'' compares the $|$mean bias$|$ of XiNet and WMC and
  notes whether their $\pm1$ s.d.\ bands separate.}
  \label{tab:xinet_vs_wmc}
  \small
  \begin{tabular}{llrrrl}
    \toprule
    Setting & $S_0/K$ & XiNet\% & WMC\% & EMS\% & Verdict \\
    \midrule
    \multirow{2}{*}{MertonJRP}
      & 0.90 & $-11.1\pm2.7$  & $+2.7\pm0.7$   & $-58.5$   & WMC (bands separate)\\
      & 0.85 & $-27.7\pm6.2$  & $+9.4\pm2.4$   & $-68.6$   & WMC (bands separate)\\
    \midrule
    \multirow{2}{*}{HestonVRP}
      & 0.90 & $-2.4\pm10.6$  & $+30.0\pm1.2$  & $+82.7$   & XiNet \\
      & 0.85 & $+15.4\pm40.0$ & $+106.2\pm4.7$ & $+227.7$  & XiNet (very high variance)\\
    \midrule
    \multirow{2}{*}{CGMY-JRP}
      & 0.90 & $+50.8\pm2.5$  & $+63.3\pm1.1$  & $+1210.7$ & XiNet (marginal)\\
      & 0.85 & $+154.1\pm6.8$ & $+169.9\pm3.3$ & $+3639.7$ & XiNet (marginal, mean diff)\\
    \bottomrule
  \end{tabular}
\end{table}

\subsection{From marginals to the measure: path-dependent claims}
\label{sec:measure_identification}

The conclusion of the previous section covers only the marginal problem: a European option is a function
of $S_T$, and given the marginal constraints minimum-relative-entropy WMC is already the optimal
reweighting, which a neural network cannot surpass. But the essence of identifying the market pricing
measure is the ability to price any claim, including exotic claims depending on the path or joint
distribution, and this is exactly where the value of learning a per-path $\xi=\mathrm{d}\Q/\mathrm{d}\Prob$
shows. Consider an at-the-money forward-start call with maturity $T_2=6$M striking at $T_1=3$M, with
payoff $\big(S_{T_2}/S_{T_1}-1\big)^+$, which depends on the joint distribution of $(S_{T_1},S_{T_2})$
rather than on any single marginal. The three methods calibrate on the exact same European option surface
(near-ATM strikes at $T_1$ and $T_2$): XiNet uses a single $\xi=f_\theta(\text{full-path features})$ to
give a self-consistent measure; WMC-joint fits the minimum-relative-entropy measure in one shot on the
same path pool using the European constraints of $T_1$ and $T_2$, equally self-consistent; WMC-per-maturity
fits the two marginals separately, without a joint distribution, and can only give a proxy price under an
independence assumption. Table~\ref{tab:forward_start} reports the bias relative to the truth
($\Q$-Monte-Carlo, cross-validated against the closed form under stationary increments).

\begin{table}[H]
  \centering
  \caption{Relative pricing bias of the at-the-money forward-start option $(S_{T_2}/S_{T_1}-1)^+$
  ($T_1=3$M, $T_2=6$M), $(\hat P-P^{\mathrm{true}})/P^{\mathrm{true}}\times100\%$. All three methods
  calibrate on the same near-ATM European surface at $T_1,T_2$. XiNet reports the mean $\pm$ standard
  deviation over $5$ random seeds; WMC reports $15$ resamplings. ``E'' is European-only calibration, and
  ``E+FS'' is calibration with one additional forward-start constraint.}
  \label{tab:forward_start}
  \small
  \begin{tabular}{llrrr}
    \toprule
    Setting & Calibration & XiNet\% & WMC-joint\% & WMC-per-maturity\% \\
    \midrule
    \multirow{2}{*}{MertonJRP}
      & E    & $+0.3\pm1.4$ & $-24.3\pm0.5$ & $+117.2$ \\
      & E+FS & $-0.9\pm1.5$ & $+0.0$        & --- \\
    \midrule
    \multirow{2}{*}{HestonVRP}
      & E    & $+23.2\pm1.4$ & $+31.1\pm0.3$ & $+104.6$ \\
      & E+FS & $-0.9\pm0.5$  & $-0.0$        & --- \\
    \bottomrule
  \end{tabular}
\end{table}

Per-maturity WMC has a bias of $+105\%$ to $+117\%$: marginal measures cannot price a path claim, and only
a single joint measure can. Under European-only calibration the bias of XiNet is smaller than that of
WMC-joint (MertonJRP: $+0.3\%$ vs.\ $-24.3\%$; HestonVRP: $+23.2\%$ vs.\ $+31.1\%$, with the $\pm1$
standard-deviation bands of the two not overlapping). The two calibrate on the same European surface and
both fit the marginal, so the difference comes from the joint structure: maximum entropy takes the least
informative joint given the marginal, which deviates when the true $\xi$ has feature-driven path
dependence, whereas $\xi=f_\theta(\text{path features})$ is a correctly specified structural prior; the
magnitude of the advantage is risk-specific, larger on jump-driven path dependence and smaller on
volatility-driven ones. European-only is still underidentified for the path claim (HestonVRP is still
biased $+23\%$), and once a matched path instrument (one forward-start constraint) is added, both methods
recover to within $\pm1\%$. This is consistent with the risk-specific identification of the previous
section: the terminal marginal is identified by European options, the tail by moment or variance
instruments, and the joint and path by path-dependent instruments, while XiNet is the single measure that
absorbs any instrument and prices all claims consistently.%

\subsection{Failure mechanisms, cross-model robustness, and diagnostics}
\label{sec:mechanisms}
\label{sec:risk_premium}

This section returns to the mechanism level, first dissecting the failure source of each baseline with a
single risk-premium model, then testing whether Greek information can compensate for the extrapolation
deficiency, and finally giving the two-regime pattern with the breadth of 10 models along with a
diagnostic for judging the strength of path dependence.

Take MertonJRP as an example ($\Prob$: $\lambda=0.5,\ \mu_j=-0.10,\ \delta=0.20$; $\Q$:
$\lambda=2.0,\ \mu_j=-0.20,\ \delta=0.25$). Here $\mathrm{d}\Q/\mathrm{d}\Prob$ is no longer a scalar, and
the failure mechanism of each baseline differs. The error of BS calibration comes from extrapolation: the
training strikes cover only ATM and ITM, and the backed-out implied volatility reflects only the $\Q$
distribution near these grid points, while the $\Q$ distribution of MertonJRP contains high-frequency
negative jumps, and the true short-maturity smile rises again in the deep-OTM direction due to jump
convexity (Figure~\ref{fig:iv_smile}); any linear extrapolation extends along the downward direction of
the training range and cannot capture the curvature reversal, underestimating by more than $60\%$ at
$T=1$M, $S_0/K=0.85$. The error of EMS and Crude MC comes from the distributional shape: the jump
frequency under $\Prob$ is only a quarter of that under $\Q$, the EMS scalar coefficient is close to $1$
and barely changes the distributional shape, while OTM pricing depends on the far tail of the
distribution and the far-tail mass of $\Prob$ is insufficient, leading to a deep-OTM underestimation of
about $77\%$, comparable to Crude MC's $75\%$. XiNet, on the other hand, distinguishes jump-dense from
jump-sparse paths from features such as realized volatility, skewness, and maximum absolute return, learns
that jump-dense paths should receive higher weight, and keeps the OTM bias generally within $\pm1\%$ (only
about $-8\%$ at deep OTM for $T=1$M). Figure~\ref{fig:mertonjrp_error} and
Table~\ref{tab:jump_rp_detail} give a systematic comparison of the methods under this setting.

\begin{figure}[t]
  \centering
  \includegraphics[width=\textwidth]{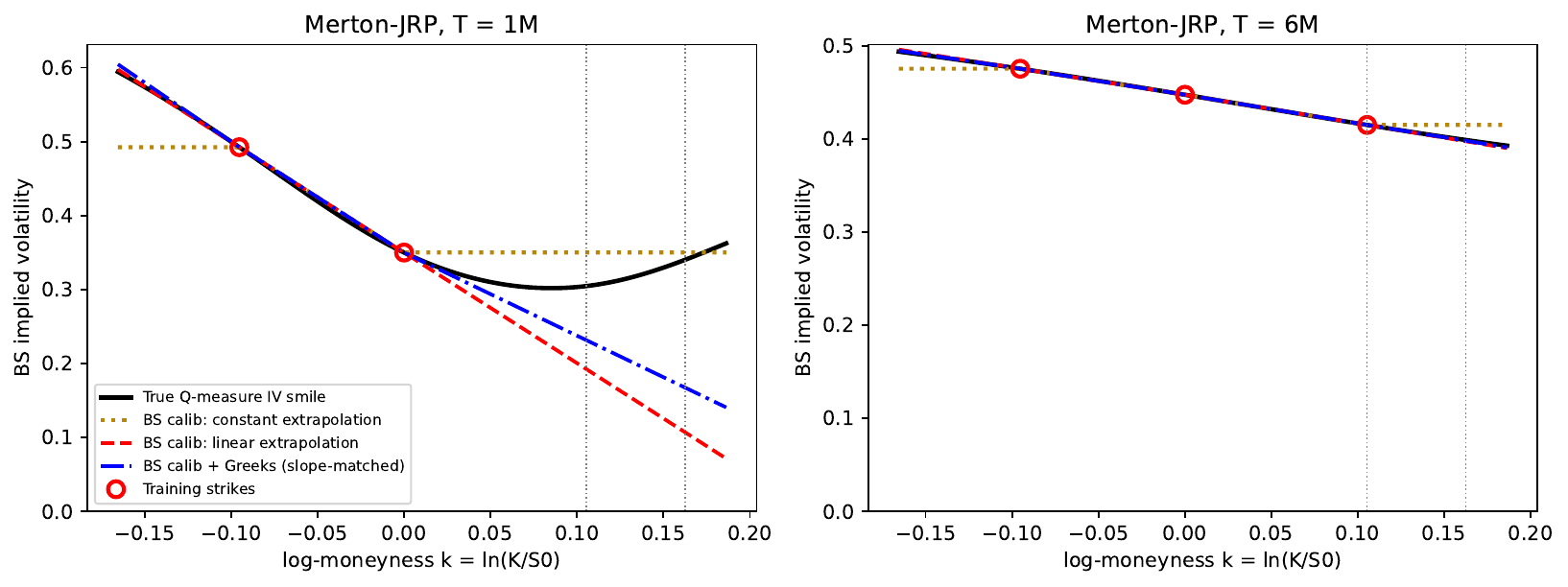}
  \caption{Extrapolation failure of the BS implied volatility smile under the Merton-JRP model
    (left: $T=1$M; right: $T=6$M). The solid black line is the full IV smile backed out from the true
    $\Q$-measure pricing formula; the red circles are the training grid points, the red dashed line is the
    linear extrapolation of the training-grid IV, and the blue dash-dotted line is the slope-matched
    extrapolation after additionally using the training-grid Greek ($\partial C/\partial K$) to recover
    the local slope; the gray vertical lines mark the deep-OTM test grid points. At $T=1$M the true smile
    rises again in the OTM direction due to jump convexity, both extrapolations extend in the opposite
    direction, and the Greek information can only correct the slope, not recover the out-of-range
    curvature; at $T=6$M the smile is approximately linear and the extrapolation error is correspondingly
    smaller.}
  \label{fig:iv_smile}
\end{figure}

\begin{figure}[t]
  \centering
  \includegraphics[width=0.82\textwidth]{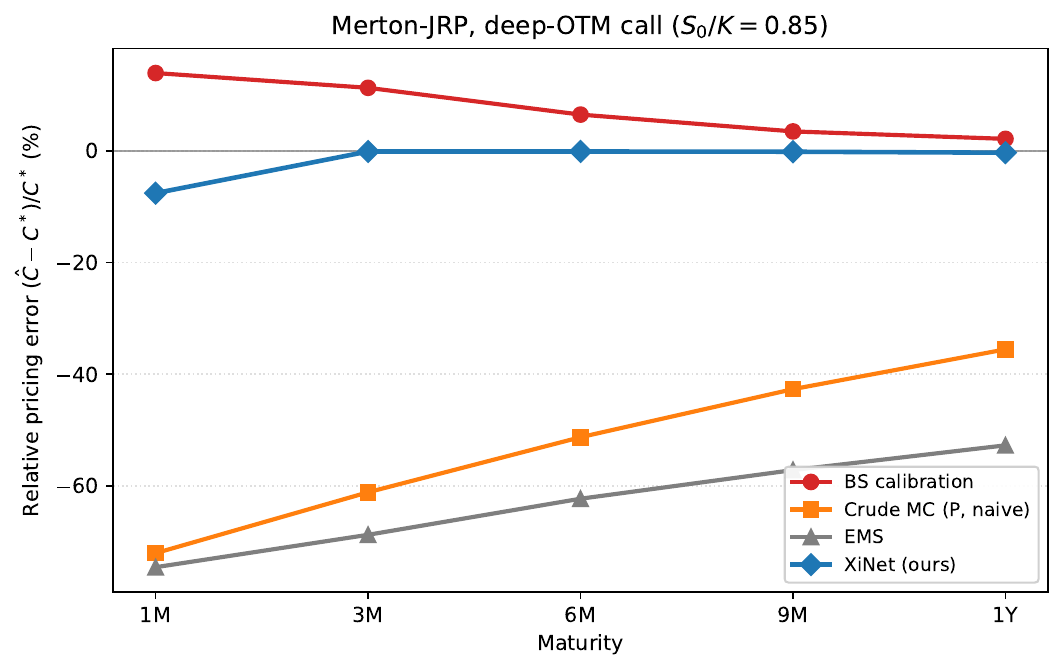}
  \caption{Comparison of pricing errors for an out-of-range call option ($S_0/K = 0.85$) under the
    Merton-JRP model. The horizontal axis is the five maturities (1M--1Y), and the vertical axis is the
    relative pricing bias $(\hat{C} - C^*)/C^* \times 100\%$. BS calibration (red) deviates severely at
    short maturities and converges as the maturity lengthens; Crude MC (orange) and EMS (gray) persistently
    underestimate by about $75\%$--$77\%$, as both fail to capture the $P\to Q$ jump risk premium; XiNet
    (blue) keeps the error within $\pm 2\%$ at all maturities.}
  \label{fig:mertonjrp_error}
\end{figure}

\begin{table}[H]
  \centering
  \caption{MertonJRP setting: comparison of pricing bias for BS calibration (constant extrapolation),
  BS+Greeks, Crude MC, EMS, XiNet.\\
  P: $\lambda=0.5$, $\mu_j=-0.10$, $\delta=0.20$; Q: $\lambda=2.0$, $\mu_j=-0.20$, $\delta=0.25$.
  BS calibration backs out implied volatility at the training strikes $\{S_0/K=1.1,\,1.0\}$ (1M) or
  $\{1.1,\,1.0,\,0.9\}$ (other maturities) and then extrapolates; Crude MC is unweighted simulation under
  $\Prob$. The truth is taken from the corrected Merton formula, and XiNet is trained with the ratio-form
  loss. Bias: $(\hat C - C^{\mathrm{true}})/C^{\mathrm{true}}\times 100\%$.}
  \label{tab:jump_rp_detail}
  \small
  \begin{tabular}{llrrrrrr}
    \toprule
    $T$ & $S_0/K$ & $C^{\mathrm{true}}$ &
      BS const\% & BS+Greeks\% & Crude MC\% & EMS\% & XiNet\% \\
    \midrule
    \multirow{3}{*}{1M}
      & 1.15 & 15.22 & $-2.9$ & $+0.2$ & $-5.5$ & $-8.3$ & $+0.1$ \\
      & 0.90 &  0.63 & $+51.3$ & $-63.5$ & $-59.6$ & $-64.2$ & $-0.5$ \\
      & 0.85 &  0.26 & $+13.9$ & $-99.6$ & $-72.0$ & $-74.5$ & $-7.6$ \\
    \midrule
    \multirow{2}{*}{6M}
      & 1.15 & 22.97 & $-1.1$ & $+0.0$ & $-9.0$ & $-18.9$ & $-0.2$ \\
      & 0.85 &  6.86 & $+6.5$ & $-0.3$ & $-51.3$ & $-62.3$ & $-0.1$ \\
    \midrule
    \multirow{2}{*}{1Y}
      & 1.15 & 29.51 & $-0.6$ & $+0.0$ & $-5.0$ & $-20.4$ & $-0.2$ \\
      & 0.85 & 14.80 & $+2.1$ & $+0.0$ & $-35.5$ & $-52.7$ & $-0.4$ \\
    \bottomrule
  \end{tabular}
\end{table}

The failure of BS calibration comes from the curvature outside the training range being unconstrainable by
information within the range (even when supplemented with the slope provided by Greeks), and the failure of
EMS and Crude MC comes from the tail mass missing from the $\Prob$ distribution being irreproducible by a
scalar correction; XiNet makes no prior assumption about the smile or the distributional shape, and both
types of bias are avoided.%
\label{sec:bs_greeks}%
To test whether BS calibration merely uses too little market information, we supplement the training
strikes with Greeks ($\partial C/\partial K$, which can be estimated in the market from adjacent quotes),
recovering the local slope of the smile at the training boundary via
$\partial C/\partial K=\partial C_{\mathrm{BS}}/\partial K|_\sigma
+\mathrm{vega}\cdot\partial\sigma_{\mathrm{iv}}/\partial K$ and then extrapolating.
Table~\ref{tab:bs_rmse_summary} shows that slope matching improves the full-grid RMSE by an average of
$25\%$ to $50\%$, but still fails at short-maturity deep OTM (MertonJRP: bias $-99.6\%$ at $T=1$M,
$S_0/K=0.85$): first-order information cannot recover the curvature outside the training range, and the
pricing information of a short-maturity jump model is exactly concentrated in the curvature. On the
diffusion-dominated HestonVRP with an approximately linear smile, the interpolation-type BS baseline is
itself of very small error (absolute RMSE $0.002$ to $0.024$), showing that XiNet's comparative advantage
is concentrated in the jump-dominated, short-maturity, deep-OTM extrapolation scenario.

\begin{table}[H]
  \centering
  \caption{RMSE (price units) of BS calibration (naive linear extrapolation) and BS+Greeks (slope-matched
    extrapolation). Parentheses give the relative bias at $T=1$M, $S_0/K=0.85$. The XiNet reference values
    are taken from Table~\ref{tab:baseline_mf}.}
  \label{tab:bs_rmse_summary}
  \small
  \begin{tabular}{lcccc}
    \toprule
    \multirow{2}{*}{Model} &
    \multicolumn{2}{c}{Full test grid RMSE} &
    \multicolumn{2}{c}{OTM-specific ($S_0/K\leq0.90$)} \\
    \cmidrule(lr){2-3}\cmidrule(lr){4-5}
    & BS naive & BS+Greeks & BS naive & BS+Greeks \\
    \midrule
    MertonJRP  & 0.127 ($-100\%$) & 0.098 ($-99.6\%$) & 0.200 & 0.154 \\
    HestonVRP  & 0.0034 ($-20\%$) & 0.0018 ($-37\%$)  & 0.0049 & 0.0028 \\
    \bottomrule
  \end{tabular}
\end{table}

Table~\ref{tab:baseline_mf} and Figure~\ref{fig:rmse_bars} summarize the RMSE of 10 models under XiNet,
EMS, and Crude MC, with EMS still serving as a diagnostic control, and the multiple relative to EMS in the
table reflects the degree of failure of the scalar assumption rather than the magnitude of XiNet's
advantage over WMC. The result is two-regime: on the five $\Prob\approx\Q$ models
$\mathrm{d}\Q/\mathrm{d}\Prob$ is approximately constant, the uniform scalar scaling of EMS is already
optimal, and XiNet is close to Crude MC but not better than EMS; on the five models with a risk premium the
scalar assumption fails, and the RMSE of EMS rises sharply. XiNet does not degrade when $\Prob\approx\Q$
and does not fail under a risk premium. The indicator distinguishing the two situations is
$\xi_{\mathrm{std}}$, the sample standard deviation of XiNet's output $\xi$ (Table~\ref{tab:xi_std}): it is
close to zero when $\mathrm{d}\Q/\mathrm{d}\Prob$ is a scalar and deviates significantly from zero when it
is path-dependent; when $\xi_{\mathrm{std}}\gtrsim0.5$ the path-dependent weights are significant, and the
scalar correction may severely underestimate at deep OTM.%
\label{sec:xi_diag}

\begin{table}[H]
  \centering
  \caption{Summary of the 10-model model-free experiment RMSE (the first three numeric columns are RMSE,
           in price units). Crude MC directly simulates $N{=}200{,}000$ paths under the $Q$ measure with no
           variance reduction, and can be viewed as the benchmark upper bound when the $Q$ measure is known.
           $^\dagger$ CGMY-class models are infinitely active Lévy processes, and step-by-step CDF path
           sampling has significant discretization error; their Crude MC is instead obtained by exact
           sampling from the terminal distribution; under 120 repetitions the single-run RMSE is
           $0.018\pm0.006$ (CGMY) and $0.024\pm0.009$ (CGMY-JRP).}
  \label{tab:baseline_mf}
  \small
  \setlength{\tabcolsep}{4pt}
  \begin{tabular}{l ccccc p{2.8cm}}
    \toprule
    Model & XiNet & EMS & EMS/XiNet & Crude MC & MC/XiNet & Note \\
    \midrule
    \multicolumn{7}{l}{\textit{$P \approx Q$ models (drift-only difference)}} \\
    \midrule
    Merton-Jump    & 0.0362 & 0.0019 & $0.05\times$  & 0.029 & $0.79\times$ & scalar approx.\ holds \\
    Heston         & 0.0255 & 0.0022 & $0.09\times$  & 0.022 & $0.86\times$ & scalar approx.\ holds \\
    VG             & 0.0354 & 0.0047 & $0.13\times$  & 0.033 & $0.93\times$ & scalar approx.\ holds \\
    Beta-Jump      & 0.0261 & 0.0057 & $0.22\times$  & 0.026 & $1.00\times$ & scalar approx.\ holds \\
    Bates          & 0.0172 & 0.0036 & $0.21\times$  & 0.031 & $1.80\times$ & scalar approx.\ holds \\
    \midrule
    \multicolumn{7}{l}{\textit{$P \neq Q$ models (with risk premium)}} \\
    \midrule
    CGMY$^\dagger$      & 0.604  & 19.79  & $32.8\times$  & 0.018 & $0.03\times$ & scalar assumption fails \\
    Heston-VRP          & 0.033  & 0.995  & $30.1\times$  & 0.010 & $0.30\times$ & scalar assumption fails \\
    Merton-JRP          & 0.039  & 4.954  & $127.6\times$ & 0.021 & $0.55\times$ & scalar assumption fails \\
    CGMY-JRP$^\dagger$  & 0.701  & 17.26  & $24.6\times$  & 0.024 & $0.03\times$ & scalar assumption fails \\
    Hawkes              & 0.196  & 0.650  & $3.3\times$   & 0.034 & $0.17\times$ & scalar assumption fails \\
    \bottomrule
  \end{tabular}
\end{table}

\begin{figure}[t]
  \centering
  \includegraphics[width=\textwidth]{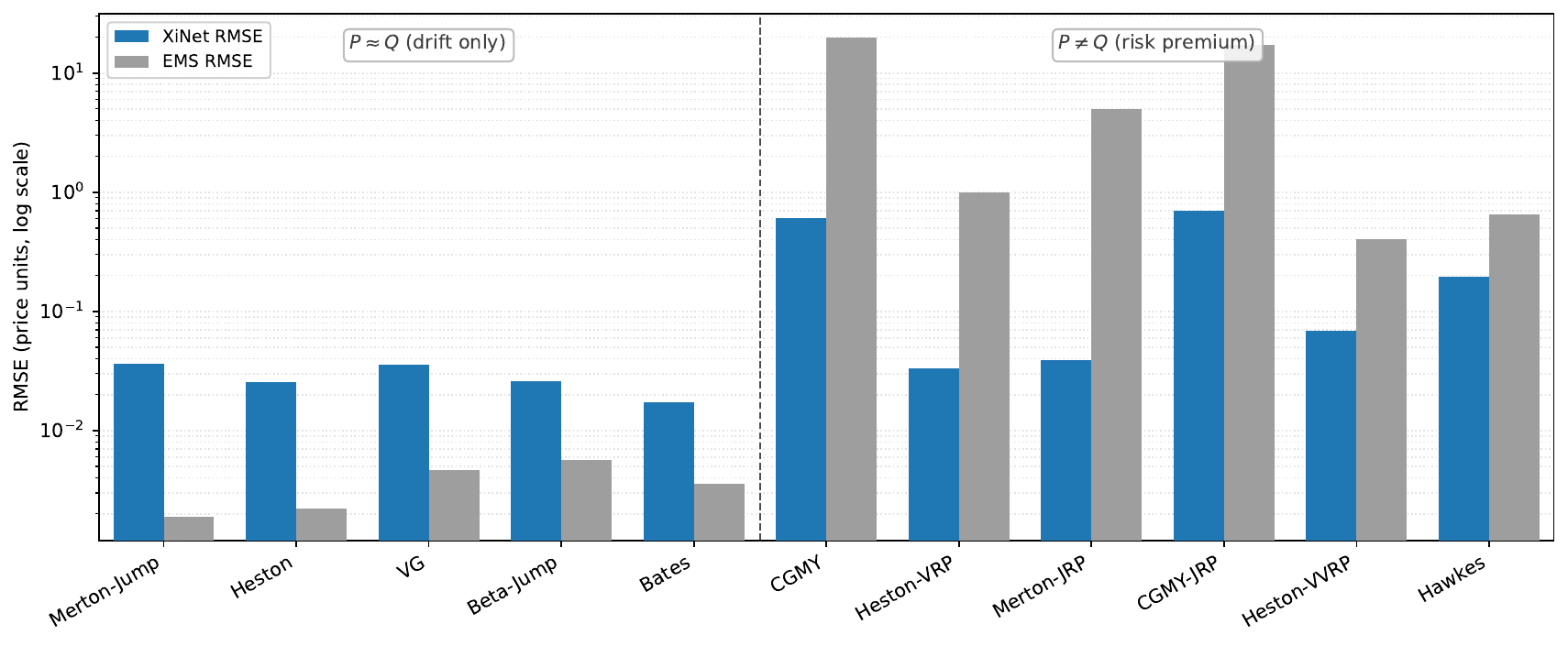}
  \caption{RMSE comparison of 11 stochastic-volatility/jump models under the four pricing methods
    (logarithmic vertical axis). The left 5 models are $P \approx Q$ (drift-only difference): EMS has the
    lowest RMSE, and XiNet is close to Crude MC; the right 6 models contain an explicit risk premium
    ($P \neq Q$): the RMSE of both BS calibration and EMS rises sharply, while XiNet stays at the lowest
    level.}
  \label{fig:rmse_bars}
\end{figure}

\begin{table}[H]
  \centering
  \caption{$\xi_{\mathrm{std}}$ (the sample standard deviation of XiNet's output) as a diagnostic
  indicator of path-dependence strength.}
  \label{tab:xi_std}
  \small
  \begin{tabular}{lccc}
    \toprule
    Model & $\xi_{\mathrm{std}}$ (typical) & EMS/XiNet & Note \\
    \midrule
    Heston-mf (baseline) & $0.25$--$0.42$ & $0.09\times$  & scalar approx.\ holds \\
    Bates-mf (baseline) & $\approx 0.40$ & $0.21\times$  & scalar approx.\ holds \\
    HestonVRP-mf & $0.60$--$0.75$ & $30.1\times$  & path dependence significant \\
    MertonJRP-mf & $0.65$--$1.41$ & $127.6\times$ & path dependence significant \\
    CGMY-mf & $1.3$--$2.7$ & $32.8\times$  & path dependence significant \\
    \bottomrule
  \end{tabular}
\end{table}

\section{Application to real data}
\label{sec:realdata}

The synthetic experiments have an oracle and have already given evidence on the mechanism; real option
data lack a truth for exotic claims, but both halves of the story can be tested on the real SPX. The
marginal half: identification theory predicts that, for European (marginal) pricing,
minimum-relative-entropy WMC is the optimal baseline and XiNet is close to it. The measure half: on
path-dependent claims, a method that only fits marginals should diverge systematically from methods that
identify the path measure (XiNet, WMC-joint). Both learn the $\Prob\to\Q$ measure change from market
prices; the scalar EMS is a finite-sample martingale correction of the Monte Carlo under an already-chosen
$\Q$ model, and there is no simulable $\Q$ on real data, so it is not applicable and is not compared in
this section.

The data are taken from the SPX options of WRDS OptionMetrics. The European surface covers $417$ trading
days from $2024$-$01$ to $2025$-$08$ ($6$ maturity buckets $1$M--$12$M, moneyness $0.85$--$1.15$), split
into train/test by time order, where the test segment is the $165$ trading days of $2025$, used for
day-by-day rolling out-of-sample evaluation; the physical measure $\Prob$ is generated by a block
bootstrap (block length $10$ days) of the $2010$--$2023$ historical daily returns, a semi-parametric
physical prior that does not overlap in time with the market option surfaces. For each (trading day,
maturity bucket), calibration is done with the market mid at near-ATM (moneyness $\{0.90,\dots,1.10\}$) as
constraints, and out-of-sample pricing is done for the two held-out wings ($0.85$, $1.15$).

\subsection{The European (marginal) surface: feasibility and the identification boundary}
\label{sec:realdata_european}

Table~\ref{tab:spx} summarizes the out-of-sample results over all $164$ rolling test days that can be
included in the statistics $\times\{1\text{M},3\text{M},6\text{M}\}$ (a total of $2005$ grid points). To
ensure WMC still gives a stable solution on the near-infeasible surface, a penalized (increasing ridge)
solver is used, and the exact-feasibility rate is counted separately.

\begin{table}[H]
  \centering
  \caption{Real SPX European pricing: average pricing error relative to the market mid
  $|\hat C-\mathrm{mid}|/\mathrm{mid}$ ($164$ rolling test days
  $\times\{1\text{M},3\text{M},6\text{M}\}$, $2005$ grid points). The calibration points are in-sample,
  and the two wings are out-of-sample.}
  \label{tab:spx}
  \small
  \begin{tabular}{lrr}
    \toprule
    & XiNet & WMC \\
    \midrule
    Calibration points (in-sample)      & $2.7\%$  & $0.6\%$ (feasible cell $0.0\%$)\\
    Held-out wings ($0.85$,\,$1.15$)  & $29.8\%$ & $27.2\%$ \\
    \quad of which deep-OTM $0.85$        & $49.5\%$ & $45.7\%$ \\
    \quad of which OTM $1.15$             & $1.7\%$  & $0.8\%$ \\
    \bottomrule
  \end{tabular}
\end{table}

There are four points in the table. On the held-out wings WMC is slightly better and XiNet is close to it
($29.8\%$ vs.\ $27.2\%$; day by day XiNet is better on only $45/139$ days, and at ITM $1.15$ both methods
are within about $2\%$), consistent with the prediction of identification theory for the marginal problem
(Sections~\ref{sec:identification}, \ref{sec:wmc_baseline}): European is a marginal problem,
minimum-relative-entropy WMC is already optimal, and the neural method brings no additional identifying
power, only needing to be close to it. The support and feasibility limitations appear: the proportion of
the market prices that are exactly feasible for the historical bootstrap physical pool is about $84.5\%$,
i.e., for about one sixth of the surface the $2024$--$25$ market risk-neutral tail cannot be obtained by
reweighting the $2010$--$23$ historical pool, in which case exact WMC is infeasible and penalized WMC is
used instead. Deep out-of-the-money $0.85$ is hard for both methods (about $47\%$), because the tail mass
of the semi-parametric physical prior is insufficient, and any method that merely reweights existing
scenarios is bound by the support; its mitigation requires introducing tail or path instruments, or a
better-fitting physical prior (such as volatility-filtered historical simulation). The within-bid--ask hit
rate is comparable for the two methods: SPX quote spreads are extremely narrow (the median relative
half-spread is $0.73\%$ at the calibration end and about $4.5\%$ on the wings), and the proportion of the
two methods' predictions falling within the nearest actual quote bid/ask is low and close (on the held-out
wings XiNet $10.5\%$, WMC $8.8\%$; at the calibration end $15.9\%$ vs.\ $17.7\%$).

\subsection{Path-dependent claims: the forward-start divergence on the real SPX}
\label{sec:realdata_forward_start}

Feasibility at the marginal level is only half the story. The claim of identification theory is that an
identified measure must be able to price path-dependent claims, whereas a method that only fits marginals
cannot. We move the forward-start experiment of Section~\ref{sec:measure_identification} from synthetic
data to the real SPX: the same ATM forward-start $(S_{T_2}/S_{T_1}-1)^+$ ($T_1$ taken as the actual
maturity of the $3$M bucket, $T_2$ that of the $6$M bucket), with the three methods calibrated on the exact
same $3$M and $6$M near-ATM European surface---XiNet and WMC-joint (path self-consistent), and
WMC-per-maturity (fitting the $3$M and $6$M marginals separately and pairing them under an independence
assumption, without a joint distribution).

The real market has no listed truth for the forward-start, so the bias relative to the truth cannot be
reported; but the essence of identification can be tested by the divergence between the methods: if
marginals cannot identify a path claim, then WMC-per-maturity must diverge systematically from the two
path-self-consistent measures. Table~\ref{tab:spx_forward_start} reports, over the $94$ test days on which
both $3$M and $6$M are available, the distribution of the pairwise relative divergence
$(\hat P_A-\hat P_B)/\hat P_B$ between the three methods.

\begin{table}[H]
  \centering
  \caption{Pairwise relative divergence $(\hat P_A-\hat P_B)/\hat P_B\times100\%$ of the ATM forward-start
  $(S_{T_2}/S_{T_1}-1)^+$ ($T_1=3$M, $T_2=6$M) on the real SPX, median and interquartile range over $94$
  test days. The three methods calibrate on the exact same $3$M, $6$M near-ATM European surface; real data
  have no truth, and the divergence itself is the identification signal.}
  \label{tab:spx_forward_start}
  \small
  \begin{tabular}{lrr}
    \toprule
    Relative divergence ($A$ vs $B$) & Median & Interquartile range \\
    \midrule
    XiNet \ vs \ WMC-per-maturity (independent)      & $-46.5\%$ & $[-47.5,\,-45.6]$ \\
    WMC-joint \ vs \ WMC-per-maturity (independent)  & $-42.8\%$ & $[-50.1,\,-39.0]$ \\
    XiNet \ vs \ WMC-joint              & $-7.1\%$  & $[-12.6,\,+7.4]$ \\
    \bottomrule
  \end{tabular}
\end{table}

There are three points in the result. The independence assumption systematically misprices by about
$46\%$: the median divergence of WMC-per-maturity relative to XiNet is $-46.5\%$, negative on all $94$ days
(the $10$/$90$ percentiles are $[-48.6\%,-44.4\%]$), i.e., the independence proxy price that ignores the
joint structure over $[T_1,T_2]$ systematically overestimates this path claim by about $46\%$. The two
path-self-consistent measures corroborate each other: the median divergence between XiNet and WMC-joint is
only $-7.1\%$, far smaller than the divergence of the two from WMC-per-maturity ($-46.5\%$, $-42.8\%$),
showing that the $46\%$ gap comes from the independence assumption itself rather than from the artificial
setting of one method. The divergence is not caused by extreme weights: the effective sample size ESS of
WMC-joint is about $74\%$, the exact-feasibility rate is $99\%$, and the maximum normalized weight is about
$9\times10^{-4}$.

This result extends the claim of Section~\ref{sec:measure_identification} from synthetic experiments to
the real SPX: under exactly the same European calibration, a method that only fits marginals differs from
a method that identifies the path measure by about $46\%$ on a path-dependent claim, while the two
path-self-consistent constructions are close to each other. Although real data have no truth for the path
claim and only a divergence measure can be given, its direction and magnitude are consistent with the
synthetic experiments with an oracle, together supporting this paper's core proposition: what option data
identify on path space is a measure, not merely a few marginals.

\section{Conclusion and outlook}
\label{sec:conclusion}

\subsection{Main conclusions}

This paper proposes XiNet, which learns the Radon-Nikodym derivative $\xi=\mathrm{d}\Q/\mathrm{d}\Prob$
directly via a neural network to identify the market equivalent martingale measure in incomplete markets.
With minimum-relative-entropy WMC as the main baseline, and under the conventions of strict held-out and
multi-seed confidence intervals, this paper's conclusions are as follows.

The core is identifying the pricing measure rather than only the marginal. European pricing is a marginal
problem, on which given the marginal constraints minimum-relative-entropy WMC is already the optimal
reweighting, and XiNet matches rather than surpasses it. But the essence of identifying the market pricing
measure is the ability to price any claim: calibrated on the same European surface, the single
self-consistent measure learned by XiNet has, in the synthetic experiments, a bias of
$+0.3\%\pm1.4\%$ (jump MertonJRP) and $+23.2\%\pm1.4\%$ (volatility HestonVRP) for a path-dependent claim
(the at-the-money forward-start option), smaller than the equally self-consistent maximum-entropy WMC-joint
($-24.3\%$, $+31.1\%$), while a method that only fits marginals fails structurally ($+105\%$ to $+117\%$);
the difference comes from the joint structure that European options cannot constrain, and
$\xi=f_\theta(\text{path features})$ is a correctly specified structural prior
(Section~\ref{sec:measure_identification}). This conclusion is consistently corroborated on the real SPX:
for the same forward-start claim, the independence proxy price that ignores the joint structure diverges
systematically from the two path-self-consistent measures by about $46\%$ (all $94$ trading days in the
same direction), while XiNet and WMC-joint differ from each other by only about $7\%$
(Section~\ref{sec:realdata_forward_start}).

Identification is risk-specific: the terminal marginal is identified by European options, the tail by
moment or variance instruments, and the joint and path by path-dependent instruments, while XiNet is the
single measure that absorbs any instrument and prices all claims consistently. Correspondingly, pure
European is underidentified for the out-of-sample tail, and under strict held-out XiNet does not
consistently outperform WMC (WMC stably wins on MertonJRP): when all calibration instruments are European
functions of $S_T$, the path features provide only an inductive bias, not identification
(Section~\ref{sec:xinet_vs_wmc}). On the real SPX European surface the two methods are comparable on the
held-out wings ($29.8\%$ vs.\ $27.2\%$) and expose the support limitation (the historical pool's
exact-feasibility rate is about $84.5\%$, and deep out-of-the-money is about $47\%$ for both methods)
(Section~\ref{sec:realdata_european}).

EMS is the special case where $\xi$ degenerates to a scalar, and this paper treats it as a diagnostic
rather than the main baseline. When $\Prob$ and $\Q$ differ only in drift, the scalar correction is
already near-optimal and XiNet does not beat it; when an explicit risk premium exists, EMS is comparable to
Crude MC and systematically underestimates deep OTM. The strength of path dependence can be diagnosed by
$\xi_{\mathrm{std}}$ (the sample standard deviation of XiNet's output): when $\xi_{\mathrm{std}}\gtrsim0.5$
the path weights are significant and the scalar correction may severely underestimate. The 8-dimensional
model-free features (realized volatility, maximum absolute return, skewness, kurtosis, etc.) can capture
the jump and volatility heterogeneity of the path without a parametric model, providing an interface for
transfer to real markets, and the grouped ablation further shows that the contribution of the jump- and
volatility-related feature groups is larger than that of any single feature;
while supplementing parametric calibration with Greeks at the training strikes cannot replace measure
learning, still failing at short-maturity deep OTM ($-99.6\%$), because first-order information cannot
recover the curvature outside the training range (Section~\ref{sec:bs_greeks}).

Finally, a note on the $25$--$130\times$ multiples reported earlier when EMS was the main baseline. These
multiples reflect the degree of failure of the scalar assumption of EMS under a risk premium, and in that
sense are not a numerical artifact; but they should not be read as the magnitude of XiNet's advantage over
a reasonable baseline: with WMC as the main baseline, all methods are nearly exact within the interpolation
range and the order-of-magnitude gap accordingly disappears, and under strict held-out XiNet does not
consistently outperform WMC. Therefore this paper takes the path-dependent claim at the measure level as
the core metric, and uses the multiples only as a diagnostic for EMS.

\subsection{Future work}

The deep-out-of-the-money and support limitations on real data are rooted in the insufficient tail mass of
the semi-parametric physical prior; introducing volatility-filtered historical simulation or a
heavier-tailed physical prior, combined with bid/ask robustification and hedging-error evaluation, may
improve out-of-range pricing. On the side of path-identifying instruments, one can include infinitely
active Lévy settings such as CGMY-JRP in the comparison of path-dependent claims (which requires a
higher-fidelity sampler), and evaluate the incremental identifying power of VIX, variance swaps, and
multi-observation-day constraints as path-identifying instruments. On the theoretical side, the
Girsanov--Meyer theorem~\citep{protter_stochastic_2005} guarantees that a semimartingale remains a
semimartingale under an equivalent measure but with a changed decomposition, and one can further establish
a general algorithm that learns $\mathrm{d}\Q/\mathrm{d}\Prob$ from $\Prob$ for any semimartingale without
specifying a model family a priori.

\section*{Acknowledgments}
We are grateful to Yang He, Xinyu Hu, Yang Lyu, Zihua She and Bochang Yang for insightful discussions on this work, which greatly improved the paper.
\newpage
\bibliography{ref}

@article{huang2014modified,
  title   = {A modified empirical martingale simulation
             for financial derivative pricing},
  author  = {Huang, Shih-Feng},
  journal = {Communications in Statistics --- Theory and Methods},
  volume  = {43},
  pages   = {328--342},
  year    = {2014}
}

@article{delbaen1994general,
  title   = {A general version of the fundamental theorem
             of asset pricing},
  author  = {Delbaen, Freddy and Schachermayer, Walter},
  journal = {Mathematische Annalen},
  volume  = {300},
  pages   = {463--520},
  year    = {1994}
}

@article{heston1993closed,
  title   = {A closed-form solution for options with stochastic
             volatility with applications to bond and currency options},
  author  = {Heston, Steven L.},
  journal = {The Review of Financial Studies},
  volume  = {6},
  number  = {2},
  pages   = {327--343},
  year    = {1993}
}

@article{frittelli2000minimal,
  title   = {The minimal entropy martingale measure
             and the valuation problem in incomplete markets},
  author  = {Frittelli, Marco},
  journal = {Mathematical Finance},
  volume  = {10},
  number  = {1},
  pages   = {39--52},
  year    = {2000}
}

@article{avellaneda1998minimum,
  title   = {Minimum-relative-entropy calibration of asset-pricing models},
  author  = {Avellaneda, Marco and Friedman, Craig and Holmes, Richard
             and Samperi, Dominick},
  journal = {International Journal of Theoretical and Applied Finance},
  volume  = {1},
  number  = {4},
  pages   = {447--472},
  year    = {1998}
}

@article{breeden1978prices,
  title   = {Prices of state-contingent claims implicit in option prices},
  author  = {Breeden, Douglas T. and Litzenberger, Robert H.},
  journal = {Journal of Business},
  volume  = {51},
  number  = {4},
  pages   = {621--651},
  year    = {1978}
}

@article{elices2008conditions,
  title   = {Conditions for the absence of arbitrage in {Implied
             Volatility} models},
  author  = {Elices, Antonio and Gim{\'e}nez, Eduard},
  journal = {SSRN Working Paper},
  year    = {2008}
}

@article{schweizer1996approximation,
  title   = {Approximation pricing and the variance-optimal martingale measure},
  author  = {Schweizer, Martin},
  journal = {The Annals of Probability},
  volume  = {24},
  number  = {1},
  pages   = {206--236},
  year    = {1996}
}

@article{duan_empirical_1998,
	title = {Empirical Martingale Simulation for Asset Prices},
	volume = {44},
	issn = {0025-1909, 1526-5501},
	url = {https://pubsonline.informs.org/doi/10.1287/mnsc.44.9.1218},
	doi = {10.1287/mnsc.44.9.1218},
	pages = {1218--1233},
	number = {9},
	journaltitle = {Management Science},
	shortjournal = {Management Science},
	author = {Duan, Jin-Chuan and Simonato, Jean-Guy},
	urldate = {2026-01-14},
	date = {1998-09},
	langid = {english},
}

@article{protter_partial_2001,
	title = {A partial introduction to financial asset pricing theory},
	volume = {91},
	rights = {https://www.elsevier.com/tdm/userlicense/1.0/},
	issn = {03044149},
	url = {https://linkinghub.elsevier.com/retrieve/pii/S0304414900000648},
	doi = {10.1016/S0304-4149(00)00064-8},
	pages = {169--203},
	number = {2},
	journaltitle = {Stochastic Processes and their Applications},
	shortjournal = {Stochastic Processes and their Applications},
	author = {Protter, Philip},
	urldate = {2026-01-22},
	date = {2001-02},
	langid = {english},
}

@book{protter_stochastic_2005,
	location = {Berlin, Heidelberg},
	title = {Stochastic Integration and Differential Equations},
	volume = {21},
	rights = {http://www.springer.com/tdm},
	isbn = {978-3-642-05560-7 978-3-662-10061-5},
	url = {http://link.springer.com/10.1007/978-3-662-10061-5},
	doi = {10.1007/978-3-662-10061-5},
	series = {Stochastic Modelling and Applied Probability},
	publisher = {Springer Berlin Heidelberg},
	author = {Protter, Philip E.},
	urldate = {2026-01-22},
	date = {2005},
	langid = {english},
}

@article{huang_multi-asset_2026,
	title = {{MULTI}-{ASSET} {EMPIRICAL} {MARTINGALE} {PRICE} {ESTIMATORS} {FOR} {FINANCIAL} {DERIVATIVES}},
	author = {Huang, Shih-Feng and Ciou, Guan-Chih},
	date = {2026},
	langid = {english},
}

@article{yuan_strong_2009,
	title = {Strong consistency of the empirical martingale simulation option price estimator},
	volume = {25},
	rights = {http://www.springer.com/tdm},
	issn = {0168-9673, 1618-3932},
	url = {http://link.springer.com/10.1007/s10255-008-8801-7},
	doi = {10.1007/s10255-008-8801-7},
	pages = {355--368},
	number = {3},
	journaltitle = {Acta Mathematicae Applicatae Sinica, English Series},
	shortjournal = {Acta Math. Appl. Sin. Engl. Ser.},
	author = {Yuan, Zhu-shun and Chen, Ge-mai},
	urldate = {2026-02-16},
	date = {2009-07},
	langid = {english},
}

@article{merton1976option,
  title={Option pricing when underlying stock returns are discontinuous},
  author={Merton, Robert C},
  journal={Journal of financial economics},
  volume={3},
  number={1-2},
  pages={125--144},
  year={1976},
  publisher={Elsevier}
}

@article{carr2002fine,
  title   = {The Fine Structure of Asset Returns: An Empirical Investigation},
  author  = {Carr, Peter and Geman, H{\'e}lyette and Madan, Dilip B.
             and Yor, Marc},
  journal = {Journal of Business},
  volume  = {75},
  number  = {2},
  pages   = {305--332},
  year    = {2002}
}

@article{hawkes1971spectra,
  title   = {Spectra of Some Self-Exciting and Mutually Exciting Point Processes},
  author  = {Hawkes, Alan G.},
  journal = {Biometrika},
  volume  = {58},
  number  = {1},
  pages   = {83--90},
  year    = {1971}
}

\appendix

\section{Summary of experimental parameters}
\label{app:params}

\begin{table}[H]
  \centering
  \caption{Parameters of the Merton model-specific feature experiment (Section~\ref{sec:experiments})}
  \label{tab:params}
  \begin{tabular}{lll}
    \toprule
    Parameter & Symbol & Value \\
    \midrule
    Initial price & $S_0$ & 100 \\
    Risk-free rate & $r$ & 0.10 \\
    Physical drift & $\mu$ & 0.15 \\
    Diffusion volatility & $\sigma$ & 0.20 \\
    Jump intensity & $\lambda$ & 0.5 \\
    Jump mean (log) & $\mu_j$ & $-0.10$ \\
    Jump volatility (log) & $\delta$ & 0.20 \\
    \midrule
    Number of training paths & $n$ & 22{,}000 \\
    Number of test paths & $n_{\mathrm{test}}$ & 55{,}000 \\
    Number of repetitions & $n_{\mathrm{rep}}$ & 120 \\
    Number of epochs & $N_{\mathrm{epoch}}$ & 3{,}000 \\
    Return loss weights & $\lambda_{\mathrm{stock}},\lambda_{\mathrm{call}}$ & $5,\,15$ \\
    Optimizer & & AdamW \\
    Learning-rate schedule & & Cosine annealing \\
    \bottomrule
  \end{tabular}
\end{table}

\begin{table}[H]
  \centering
  \caption{Parameters of the risk-premium experiments (Section~\ref{sec:risk_premium}, comparison of P and
  Q parameters)}
  \label{tab:rp_params}
  \small
  \begin{tabular}{llll}
    \toprule
    Setting & Parameter & P-measure value & Q-measure value \\
    \midrule
    \multirow{2}{*}{HestonVRP} & $\kappa$ (mean-reversion speed) & 2.0 & 5.0 \\
                                & $\theta$ (long-run variance) & 0.04 & 0.016 \\
    \midrule
    \multirow{3}{*}{MertonJRP} & $\lambda$ (jump intensity) & 0.5 & 2.0 \\
                                & $\mu_j$ (jump log-mean) & $-0.10$ & $-0.20$ \\
                                & $\delta$ (jump log-volatility) & 0.20 & 0.25 \\
    \midrule
    \multirow{3}{*}{CGMY-JRP} & $C$ (activity level) & 0.5 & 0.8 \\
                               & $G$ (positive-jump decay rate) & 8.0 & 4.0 \\
                               & $M$ (negative-jump decay rate) & 12.0 & 18.0 \\
    \midrule
    \bottomrule
  \end{tabular}
\end{table}
\section{Structural limitations of empirical martingale simulation}
\label{sec:ems_limit}

The EMS proposed by \citet{duan_empirical_1998} addresses the following phenomenon: even when the
theoretical model satisfies the martingale property, a finite Monte Carlo sample almost surely violates the
martingale condition. Its recursive correction applies the same per-time-step multiplier
$S_0/Z_0(t_j,n)$ to all paths, where $Z_0(t_j,n)=\frac{1}{n}e^{-rt_j}\sum_i Z_i(t_j,n)$, thereby forcing
the discounted sample mean to equal the current price $\frac{1}{n}\sum_i e^{-rt_j}S_i^*(t_j,n)=S_0$. From
the Radon-Nikodym viewpoint, as $n\to\infty$ we have $Z_0\to S_0$ (law of large numbers), so
$\mathrm{d}\hat{\Q}^{\mathrm{EMS}}/\mathrm{d}\Q\to 1$---\textbf{the EMS correction vanishes as the sample
size grows, it does not constitute a $\Prob\to\Q$ measure change, and this paper therefore uses it only as
a finite-sample control rather than a competing baseline.}

\textbf{Failure mechanism.} The true $\xi=\mathrm{d}\Q/\mathrm{d}\Prob$ is path-dependent under a jump
diffusion: paths with different numbers of jumps $N(\omega)$ have different relative weights between
$\Prob$ and $\Q$. The scalar correction of EMS applies the same scaling to all paths and cannot
distinguish paths with $N(\omega)=0$ from those with $N(\omega)=3$, so it necessarily produces a
systematic bias when the true $\xi(\omega)$ varies significantly with $N(\omega)$.

\textbf{Experimental evidence.} Under Merton model Setting A ($\Prob$ and $\Q$ differ only in drift:
$\mu=0.15$ vs.\ $r=0.10$, same jump parameters), over the full grid (25 $(T,S_0/K)$ grid points) the RMSE
of EMS is about $0.0022$ and the per-grid-point bias is $\leq 0.3\%$ (only $-1.1\%$ at deep OTM for
$T=1$M), and Neural $\xi$ has a bias of $\pm1\%$ in the same setting---both are nearly exact, confirming
that in the \emph{drift-only} case the scalar correction already suffices. Conversely, when $\Prob$ and
$\Q$ genuinely differ in the jump distribution (a risk premium exists), $\xi(\omega)$ is a non-degenerate
function of the path and the scalar correction is in principle insufficient---corresponding to the
MertonJRP and other settings of Section~\ref{sec:risk_premium} in the main text (EMS can underestimate by
up to $-75\%$ at deep out-of-the-money).

\end{document}